\documentclass{article}
\usepackage[utf8]{inputenc}
\usepackage[margin=1in]{geometry}
\usepackage{rotating}
\usepackage{booktabs}
\usepackage{array}
\usepackage{amsmath, amsfonts, amssymb}
\usepackage[table,xcdraw]{xcolor}
\usepackage{hyperref}
\usepackage{bbm}
\usepackage{algorithm}
\usepackage[noend]{algpseudocode}
\usepackage{subcaption}
\usepackage{graphbox}
\usepackage{accents}
\usepackage{kpfonts}

\usepackage{tcolorbox}
\definecolor{c1}{cmyk}{0,0.6175,0.8848,0.1490} 
\definecolor{c2}{cmyk}{0.1127,0.6690,0,0.4431} 
\definecolor{c3}{cmyk}{0.3081,0,0.7209,0.3255} 
\definecolor{c4}{cmyk}{0.6765,0.2017,0,0.0667} 
\definecolor{c5}{cmyk}{0,0.8765,0.7099,0.3647} 
\definecolor{forestgreen}{HTML}{397727}

\newtcbox{\hlprimarytab}{on line, rounded corners, box align=base, colback=c3!10,colframe=white,size=fbox,arc=3pt, before upper=\strut, top=-2pt, bottom=-4pt, left=-2pt, right=-2pt, boxrule=0pt}
\newtcbox{\hlsecondarytab}{on line, box align=base, colback=red!10,colframe=white,size=fbox,arc=3pt, before upper=\strut, top=-2pt, bottom=-4pt, left=-2pt, right=-2pt, boxrule=0pt}

\newtcbox{\hlorangetab}{on line, box align=base, colback=orange!10,colframe=white,size=fbox,arc=3pt, before upper=\strut, top=-2pt, bottom=-4pt, left=-2pt, right=-2pt, boxrule=0pt}

\newtcbox{\hlgraytab}{on line, rounded corners, box align=base,colframe=white,size=fbox,arc=3pt, before upper=\strut, top=-2pt, bottom=-4pt, left=-2pt, right=-2pt, boxrule=0pt}

\usepackage{hhline}
\usepackage{colortbl}

\usepackage{tabularx,booktabs}
\newcolumntype{Y}{>{\centering\arraybackslash}X}

\usepackage{makecell}

\usepackage[numbers,sort]{natbib}

\usepackage{mathtools}
\mathtoolsset{showonlyrefs}

\usepackage{amsthm}
\usepackage[normalem]{ulem}
\usepackage{soul}

\newtheorem{theorem}{Theorem}

\newtheorem{assumption}{Assumption}

\usepackage{chngcntr}
\usepackage{apptools}
\AtAppendix{\counterwithin{theorem}{section}}
\AtAppendix{\counterwithin{prop}{section}}
\AtAppendix{\counterwithin{cor}{section}}
\AtAppendix{\counterwithin{lemma}{section}}

\usepackage{mathtools}

\usepackage{multicol}
\usepackage{multirow}

\definecolor{Gray}{gray}{0.8}
\definecolor{LightGray}{gray}{0.95}

\def\R{\mathbb{R}}
\def\E{\mathbb{E}}

\def\N{\mathbb{N}}
\def\cT{\mathcal{T}}

\def\cP{\mathcal{P}}
\def\cG{\mathcal{G}}
\def\ci{\mathrm{CI}}

\newcommand{\cN}{\mathcal{N}}

\newcommand{\CI}{\mathrm{CI}}

\renewcommand{\D}{\mathcal{D}}

\usepackage{esvect}

\makeatletter
\def\blfootnote{\xdef\@thefnmark{}\@footnotetext}
\makeatother

\title{Valid Inference with Synthetic Data\\ via Task Exchangeability}

\author{Lezhi Tan$^*$ \qquad Tijana Zrnic$^{\dagger,*}$\\ \\
$^*$Department of Management Science \& Engineering\\
$^\dagger$Department of Statistics\\ \\
Stanford University
}

\date{}

\begin{document}

\maketitle

\begin{abstract}
There is a proliferation of work arguing for the use of synthetic data in scientific research. For example, social scientists are arguing for the use of LLM-generated ``silicon samples'' in pilot studies; AI evaluations increasingly rely on ``LLM-as-a-judge'' outputs; and proteomics research is accelerated by generative models that produce synthetic protein structures. These developments raise an intriguing possibility: synthetic data may help researchers ask more questions, run more studies, and accelerate discovery. But they also raise a fundamental concern: synthetic data can be biased, noisy, and misspecified.
 In this work, we propose statistical principles for using synthetic data in scientific research with provable validity guarantees. The key insight is a new technical condition that we call \emph{task exchangeability}. Informally, this is a requirement that the researcher can identify historical tasks, for which real data is available, such that their current task of interest is exchangeable with the historical tasks in an appropriate mathematical sense. We develop methods for valid inference under task exchangeability, together with extensions that provide guarantees even beyond exchangeability. We demonstrate the framework on public opinion surveys with silicon samples and AI evaluation with autoraters.
\end{abstract}

\section{Introduction}

Across the sciences, researchers are increasingly turning to synthetic data: artificially generated data designed to mimic the statistical properties of real-world data. The appeal is easy to see. Synthetic data can help researchers study rare events, circumvent privacy constraints on sensitive data, and probe settings where real data is scarce or expensive. Social scientists are thus exploring LLM-generated ``silicon samples'' as a scalable way to study human behavior \cite{horton2023large, park2023generative}; AI evaluation increasingly relies on ``LLM-as-a-judge'' assessments \cite{zheng2023judging}; and generative models are accelerating biological discovery by producing synthetic protein structures \cite{jumper2021highly}.

Synthetic data, however, can be biased, noisy, and misaligned with real human judgment. For example, \citet{bisbee2024synthetic} show that LLM-generated synthetic ``respondents'' can match some aggregate patterns in survey data while failing to reproduce crucial inferential features, including variation across individuals and sensitivity to context. \citet{santurkar2023whose} show that the opinions reflected by LLMs can be substantially misaligned with those of many US demographic groups. More broadly, synthetic data can look plausible while still being wrong in the ways that matter for statistical inference. These concerns do not mean that synthetic data should be discarded altogether; they mean that standard statistical tasks---forming confidence intervals, testing hypotheses, and quantifying uncertainty---cannot be justified by simply pretending that synthetic data is real data.

Indeed, formally justifying inference with synthetic data is challenging. Suppose the scientific target is a property of a real-world distribution $\cP^*$, but the researcher only has access to synthetic samples $
\tilde X_1,\ldots,\tilde X_N \sim~\tilde \cP$.
Without assumptions relating $\cP^*$ and $\tilde \cP$, inference about $\cP^*$ from samples drawn from $\tilde \cP$ is, of course, impossible. Therefore, any solution allowing valid inference from synthetic data must rely on assuming or learning the relationship between $\cP^*$ and $\tilde \cP$.

Our starting point is a simple and intuitive solution to this problem. If real data is unavailable for the target task, then one can look for related historical tasks where real data was available. On these historical tasks, the researcher can compare real-data conclusions to synthetic-data conclusions, and learn how wrong the synthetic data tends to be. Then, for the target task, the researcher can begin with the naive confidence interval that treats the synthetic data as real, and enlarge it using the errors learned from the historical tasks.

We endow this intuitive solution with formal guarantees. In particular, we establish \emph{task exchangeability} as the key condition that makes this reasoning valid. 
Task exchangeability is an exchangeability condition in the formal statistical sense on the sequence of tasks and their corresponding datasets.
Under this condition, the empirical distribution of historical errors can be used to transform naive intervals based on synthetic data into confidence intervals with formal coverage guarantees for the true target quantity.

We develop methods for valid inference under task exchangeability, together with extensions that provide guarantees even beyond strict exchangeability. In particular, we show how the coverage guarantee degrades gracefully as the exchangeability assumption is violated. The resulting framework enables valid inference with synthetic data, no matter how misspecified this data may be.

\subsection{Preview of inference via task exchangeability}
\label{sec:preview}

We now give a brief preview of the main idea. Suppose we want to learn a target estimand $\theta^* = \theta^*(\cP^*)$. We have no data from the corresponding true distribution $\cP^*$, but we do have a synthetic dataset $\tilde S = \{\tilde X_1,\dots,\tilde X_N\}$, drawn from some synthetic distribution $\tilde{\mathcal P}$. From this synthetic sample, we can estimate the corresponding synthetic-data estimand $\tilde \theta = \theta^*(\tilde \cP)$. Using any standard method such as the central limit theorem or Hoeffding's inequality, we can construct a confidence interval $\widetilde{\mathrm{CI}} = [\tilde L,\tilde U]$ such that
\[
P\left(
\tilde \theta \in \widetilde{\mathrm{CI}}
\right)
\geq
1-\alpha.
\]
Now suppose that we knew that $|\theta^* - \tilde \theta| \leq \Delta$, for some gap $\Delta$. Then, we could simply expand the synthetic-data interval by $\Delta$ in both directions:
\[
\mathrm{CI}
=
\widetilde{\mathrm{CI}} + [-\Delta,\Delta]
=
[\tilde L-\Delta,\tilde U+\Delta].
\]
Indeed,
\begin{align}
P\left(
\theta^* \in \mathrm{CI}
\right)
&\geq
 P\left(
\tilde \theta \in \widetilde{\mathrm{CI}},
|\theta^*-\tilde \theta|\leq \Delta
\right)
\geq
1-\alpha.
\end{align}
Thus, if we knew how far the synthetic estimand could be from the real estimand, valid inference would be immediate: compute the naive synthetic-data interval, and inflate it by this gap.

The challenge, of course, is that $\Delta$ is not known. Task exchangeability provides a way to infer this gap from related historical tasks. Suppose we have historical tasks $j=1,\dots,T$, for which both real and synthetic data are available, allowing us to measure the corresponding absolute gaps between the synthetic and real estimands $
\Delta_j$.
The target gap is $
\Delta_{T+1} \equiv \Delta$.
Under task exchangeability, formally defined in the next section, the target discrepancy $\Delta_{T+1}$ is exchangeable with the historical gaps, meaning $\Delta_1,\ldots,\Delta_{T+1}$ is an exchangeable sequence of random variables. Consequently, up to finite-sample corrections, $\Delta_{T+1}$ is at most the $(1-\beta)$-quantile of the historical discrepancies with probability at least $1-\beta$. Denoting this empirical quantile by
\[
\hat \Delta
=
Q_{1-\beta}\left(
\Delta_1,\ldots,\Delta_T
\right),
\]
we can form the expanded interval
\[
\mathrm{CI}
=
[\tilde L-\hat \Delta,\tilde U+\hat \Delta].
\]
Combining the validity of the synthetic-data interval with the validity of the upper bound on the gap $\Delta$ yields
\[
P(
\theta^* \in \mathrm{CI}
)
\geq P(
\tilde \theta \in \widetilde{\mathrm{CI}}, \Delta \leq \hat \Delta
) \geq
1-\alpha-\beta,
\]
by a union bound.

To sum up, we have shown that we can infer $\theta^*$ by relying on real data from the historical tasks, without requiring any real data from the target task.
Our main proposed method follows this general template, with additional finite-sample corrections making the argument fully rigorous.

\begin{figure}[t]
\centering
\includegraphics[width=0.9\textwidth]{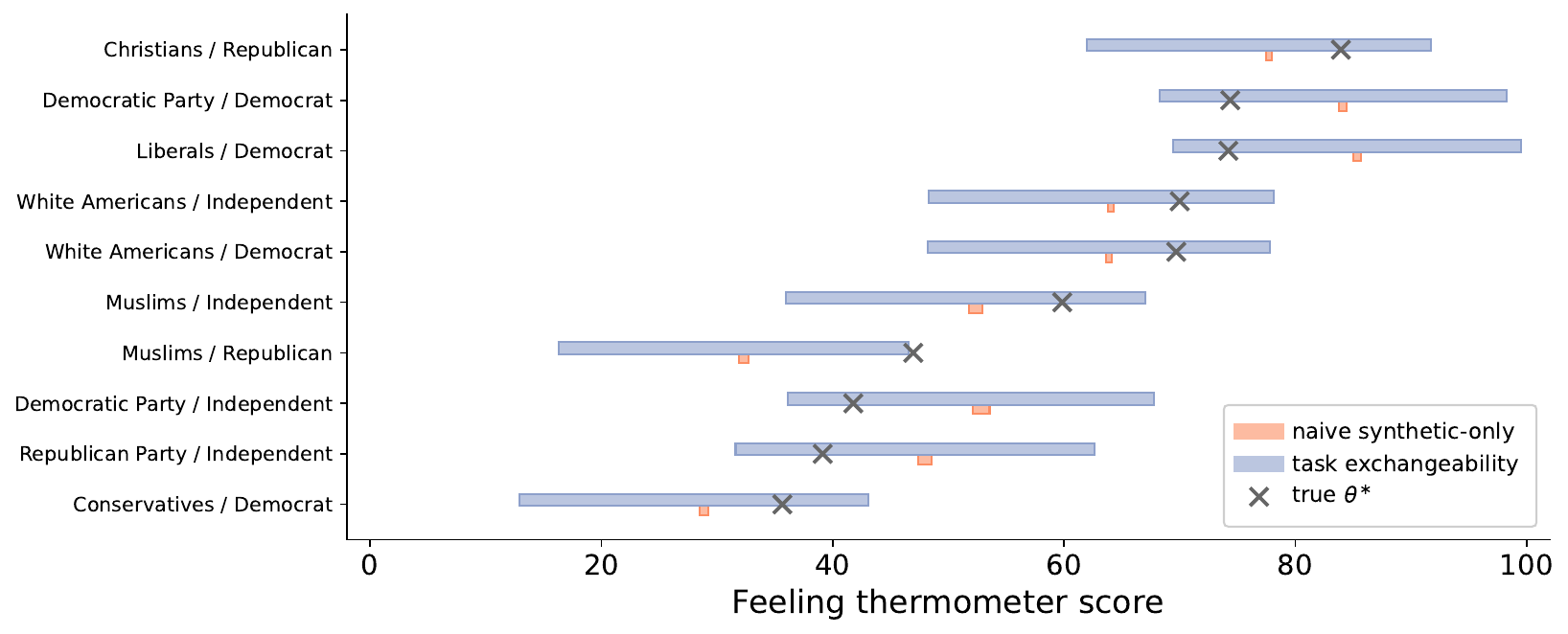}
\caption{\textbf{Inference on ANES feeling-thermometer scores via task exchangeability (preview).}
Each row corresponds to a task defined by a target group and respondent subgroup, with the estimand equal to the average ANES feeling-thermometer score on a 0--100 scale. The naive synthetic-only intervals treat the synthetic data as real data; our work proposes the task-exchangeability intervals.}
\label{fig:anes_intervals}
\end{figure}

As an empirical preview, Figure~\ref{fig:anes_intervals} illustrates inference via task exchangeability using data from \citet{bisbee2024synthetic}, who study whether large language models can be used to generate synthetic survey responses. The underlying real data comes from the American National Election Studies (ANES), a long-running survey program measuring political attitudes and behavior in the United States. We focus on ANES feeling-thermometer questions, in which respondents rate groups or political parties on a scale from 0 to 100, with higher values indicating warmer feelings. Each task corresponds to estimating the average thermometer score for a particular target group within a respondent subgroup, such as views toward Republicans among Republican respondents or views toward Muslims among Independents.

For each task, we compare the naive synthetic-only interval, which treats the synthetic data as real data, with the interval obtained via task exchangeability. We set the target overall coverage level to be $0.85$. We treat survey results from the previous survey wave, reporting analogous feeling thermometer scores, as the historical tasks exchangeable with the one of interest. The synthetic-only intervals are too narrow and severely biased, missing the true estimand $\theta^*$. This is consistent with the findings of Bisbee et al., who observe that language model responses exhibit less variation than real surveys.
By contrast, the task-exchangeability intervals are wider, reflecting uncertainty about the real--synthetic gap, and cover the true values at the desired rate.

\subsection{Related work}

We discuss several methodological developments most related to our proposal.

\paragraph{Prediction-powered inference.}
Our work is conceptually related to prediction-powered inference (PPI) \cite{angelopoulos2023prediction, angelopoulos2023ppipp, zrnic2023cross, fisch2024stratified, cowen2026multiple, song2026demystifying, chen2026power}, which provides valid inference by combining gold-standard data with abundant machine learning predictions. In its standard form, PPI uses a small labeled sample from the target population to debias an estimator computed using machine-learning-predicted labels on a larger unlabeled sample. Several recent works have studied variants of PPI with imputed covariates, rather than imputed labels \citep{kluger2025prediction, miao2025assumption, ji2025predictions, zhao2025imputation}. However, they still require a complete gold-standard sample, as well as a subset of observed covariates everywhere that are used to compute the predictions. In our setting, the data for the target task is fully synthetic, and the real--synthetic discrepancy is calibrated using historical tasks rather than real data from the same task.
Notably, a few recent works have studied settings with many related tasks. \citet{emmenegger2026prediction} study prediction-powered inference across tasks, showing how information from related tasks can be used to recalibrate predictions while preserving task-specific validity. \citet{li2025prediction} combine prediction-powered inference with empirical Bayes ideas, using shrinkage across related problems to improve estimation.
These works share with ours the idea that related tasks can provide useful information about the reliability of synthetic measurements. In the PPI literature, the synthetic quantity is typically an imputed label or covariate attached to other real covariates, while in our setting the data points are fully synthetic.

\paragraph{Inference with synthetic data.}
Several recent papers study valid inference with imperfect synthetic data. \citet{byun2026valid} develop a method-of-moments approach for inference using synthetic samples, using real data to correct for the bias. The synthetic data is treated as an auxiliary variable to reduce the variance of the real-data estimator, not as entirely fresh samples.
\citet{bashari2025statistical} propose GESPI, a general framework for distribution-free inference with synthetic data that relies on a guardrail that protects against relying too heavily on poor synthetic data. Our work belongs to the same broader agenda of developing inference procedures that can benefit from synthetic data without treating it as if it were real. A distinctive feature of our work is that we do not assume access to any real data from the task of interest; the idea is to handle synthetic datasets generated from scratch. We therefore use historical tasks to calibrate this gap, with task exchangeability serving as the condition that makes this calibration valid. Our framework can be extended to settings where some target-task real data is available; we discuss this case and the connection to \cite{bashari2025statistical} in Section~\ref{sec:some_real_data}.

Most closely related to our setting, \citet{huang2025many} study inference about human survey responses when real responses are available for a collection of calibration questions but not for the target question. Their method selects the number of LLM-simulated responses so that confidence sets constructed from synthetic data contain real-data confidence sets sufficiently often across the calibration questions. By limiting the synthetic sample size, the method retains enough simulation noise for the resulting confidence set to account for human--LLM misalignment. Under an i.i.d.\ sampling model for the calibration and target questions, they establish coverage for a new question and interpret the selected sample size as the effective number of human respondents represented by the LLM.
Although the motivation is similar to ours, the resulting approaches are quite different. \citet{huang2025many} calibrate a synthetic sample size so that the bias of the synthetic data is dominated by its stochastic error. Their confidence intervals therefore use synthetic sampling variability as a buffer against human--LLM misalignment and remain centered at the synthetic estimator. We instead estimate the real--synthetic gap via exchangeability directly. Consequently, our intervals need not be centered at the synthetic estimator: the estimated gap can shift the interval to account for systematic bias. This also leads to the opposite prescription for the synthetic sample size. Rather than deliberately retaining synthetic sampling noise, our method favors using as much synthetic data as possible, since doing so sharpens estimation of both the synthetic estimand and the real--synthetic gaps observed on the historical tasks.

Other, more distant approaches to inference with synthetic data include those of \citet{decruyenaere2024real, keret2025glm, mccaw2024synthetic}. These papers generally rely on access to some real-data information from the task of interest, parametric modeling assumptions on the data-generating process, or specific inferential targets.

\paragraph{Conformal prediction and permutation testing.} Our approach is also connected to conformal prediction \cite{vovk2005algorithmic, angelopoulos2024theoretical} and permutation-based inference, where exchangeability is used to turn empirical ranks into finite-sample guarantees. The closest connection is by \citet{andrews2025transfer}, who study the transfer performance of economic models across domains. In their setting, domains are modeled as exchangeable, and the empirical distribution of transfer errors across observed domains is used to form prediction intervals for the transfer error on a new domain. We use a similar exchangeability principle at the level of tasks, but apply it to statistical inference with synthetic data for a new task.
We also draw on ideas from weighted variants of conformal prediction \cite{tibshirani2019conformal, fannjiang2022conformal, guan2023localized}. In particular, our weighted procedure can be viewed as a task-level analogue of weighted split conformal prediction by \citet{barber2023conformal}: historical tasks receive relevance weights, and departures from exchangeability appear as explicit total-variation penalties in the coverage guarantee.

\paragraph{Empirical Bayes.} 
Our framework is related in spirit to empirical Bayes methods \cite{robbins1964empirical, robbins1992empirical}. Empirical Bayes leverages common population-level structure across problems to improve task-specific estimation. Our work adopts a similar meta-distribution perspective: the current task and the historical tasks are viewed as related draws from a larger population of tasks. Rather than using this structure to estimate a prior or construct shrinkage estimators, we use it to calibrate the distribution of real--synthetic estimation gaps. In this sense, task exchangeability can be viewed as an empirical-Bayes-style source of information, but used for a different purpose.

\section{Preliminaries}

We describe the mathematical model used throughout the paper and formally define task exchangeability.

Let $\theta(\cP) \in \Theta$ denote a statistical target, defined by applying a functional $\theta(\cdot)$ to a data distribution $\cP$. The set $\Theta$ is a space of possible quantities of interest, such as means, medians, regression coefficients, and other population-level summaries. We refer to a functional--distribution pair $\cT = (\theta,\cP)$ as a \emph{task}. The current task of interest is denoted by $\cT^* = (\theta^*,\cP^*)$,
with target estimand $\theta^*(\cP^*)$. Slightly abusing notation, we will sometimes keep the distribution implicit and simply write $\theta^* \equiv \theta^*(\mathcal{P}^*)$.

Traditionally, to infer $\theta^*(\cP^*)$, one would collect a real dataset $
S = \{X_1,\dots,X_n\} \stackrel{\mathrm{i.i.d.}}{\sim} \cP^*$.
The setting of interest in this paper is one where collecting such a dataset is costly or challenging. As a potential remedy, we have access to a general-purpose synthetic data generator $\cG$---for example, a large language model. Given a task $\cT^* = (\theta^*,\cP^*)$, we can instantiate the generator for this task and draw synthetic samples from the resulting distribution. We denote this synthetic distribution by
$\tilde \cP = \cG_{\cT^*}.$
For example, $\tilde \cP$ could be the distribution of ``silicon samples,'' obtained by prompting a language model to mimic responses from a target demographic population. We then observe a synthetic dataset $\tilde S = \{\tilde X_1,\dots,\tilde X_N\}
\stackrel{\mathrm{i.i.d.}}{\sim}
\tilde \cP.$
The corresponding synthetic-data target is $
\tilde \theta = \theta^*(\tilde \cP).$

For a functional $\theta$ and dataset $S$, we write $\theta(S)$ to mean $\theta(\cdot)$ applied to the empirical distribution induced by $S$; for instance, if $\theta$ is a mean functional, then $\theta(S)$ is the sample mean. This is not strictly necessary, in the sense that other estimators could be applied to dataset $S$, but we stick to this choice for simplicity of exposition.

The key assumption we leverage is that we can identify a set of historical tasks $\cT_1, \dots, \cT_T$, for which we have real data, that are ``similar'' to $\cT^*$ in a precise mathematical sense. Each historical task is a tuple $\cT_j = (\theta_j,\cP_j)$.
For historical task $j$, we observe
$
S_j = \{X_1^j,\dots,X_{n_j}^j\}
\stackrel{\mathrm{i.i.d.}}{\sim}
\cP_j.$
We can also instantiate the same synthetic data generator for each historical task, obtaining $\tilde \cP_j := \cG_{\cT_j}$, from which we can draw synthetic samples $
\tilde S_j = \{\tilde X_1^j,\dots,\tilde X_N^j\}
\stackrel{\mathrm{i.i.d.}}{\sim}
\tilde \cP_j.$
We write $\theta_j = \theta_j(\cP_j)$ for the real-data target and $\tilde \theta_j = \theta_j(\tilde \cP_j)$ for the corresponding synthetic-data target.

Our core assumption is \emph{exchangeability} of the task--dataset pairs. Below, $S$ denotes the counterfactual real dataset for the current task $\cT^*$; this is the dataset we would have liked to collect from $\cP^*$, but do not observe.

\begin{assumption}[Task exchangeability]
\label{ass:task_exchangeability}
We assume that $(\cT_1, S_1),\dots, (\cT_{T+1}, S_{T+1})$, where $\cT_{T+1}=\cT^*, S_{T+1} = S$, are exchangeable.
\end{assumption}

In other words, if we shuffle the order of the task-dataset pairs, the distribution of the sequence remains the same. Notice that Assumption \ref{ass:task_exchangeability} assumes that tasks and datasets are intrinsically random and drawn from a ``meta-distribution.'' For example, the dataset sizes $n_j$ and the estimands $\theta_j$ are both random and drawn from a common underlying distribution.
It is also possible to assume that only $\cT_1,\dots,\cT_{T+1}$ are exchangeable, but the resulting procedure is more conservative. We discuss this extension in Appendix \ref{sec:weaker_condition}.

In Figure \ref{fig:anes_intervals}, there are $T=33$ historical tasks. Each $\cP_j$ is defined as the distribution of feeling thermometer scores for a particular target group within a respondent group; $\theta_j$ is simply the mean for all $j$. The synthetic generator $\cG$ is a GPT model, and it is instantiated for each task $\cT_j$ by prompting the model to mimic a respondent from the respondent group that defines $\cT_j$.

\section{Valid inference via task exchangeability}

We now describe our main procedure for valid inference on the current target $\theta^*=\theta^*(\cP^*)$ using synthetic data and historical exchangeable tasks. Throughout this section, we index the current task by $T+1$, so that $\cT_{T+1}=\cT^*$, $\theta_{T+1}=\theta^*$, and $\cP_{T+1}=\cP^*$. We use $S_{T+1}=S$ to denote the counterfactual real dataset for the current task, which is not observed.

The procedure has two ingredients. First, we use the synthetic data for the current task to estimate the synthetic-data target $\tilde\theta=\theta^*(\tilde\cP)$. Second, we use the historical tasks to estimate how far the real-data target tends to be from the corresponding synthetic-data target. For task $j$, define the real--synthetic gap
$\Delta_j = \theta_j(\cP_j)-\theta_j(\tilde\cP_j)$.
Our goal is to infer $\theta^*=\tilde\theta+\Delta_{T+1}$, even though the current real-data gap $\Delta_{T+1}$ cannot be directly estimated from real data.

We assume access to standard inference procedures for population targets.

\begin{assumption}[Classical inference methods]
\label{ass:classical_methods}
Fix any error level $\alpha\in(0,1)$, sample sizes $n,n'\in\N$, and functional $\theta\in\Theta$. Let $S=\{X_1,\dots,X_n\}\stackrel{\mathrm{i.i.d.}}{\sim}\cP$ and $S'=\{X_1',\dots,X_{n'}'\}\stackrel{\mathrm{i.i.d.}}{\sim}\cP'$ be independent datasets. We assume access to two valid confidence interval constructions:
\begin{itemize}
\item $\mathrm{CI}^{\theta,\alpha}(S)$, which satisfies $P\left(\theta(\cP)\in \mathrm{CI}^{\theta,\alpha}(S) \right)\geq 1-\alpha$;
\item $\Delta^{\theta,\alpha}(S,S')$, which satisfies $P\left(\theta(\cP)-\theta(\cP')\in \Delta^{\theta,\alpha}(S,S') \right)\geq 1-\alpha$.
\end{itemize}
\end{assumption}

The first procedure gives a confidence interval for a target under a single distribution. The second gives a confidence interval for the difference between the same functional evaluated under two distributions. In our setting, it will be applied with one real sample and one synthetic sample. To satisfy Assumption \ref{ass:classical_methods}, one can use generic confidence interval constructions such as the bootstrap or intervals based on the central limit theorem.

We state the main algorithm in Algorithm \ref{alg:main_algo}. We use $v_{(k)}$ to denote the $k$-th order statistic of $v_1,\dots,v_T$, with the conventions $v_{(0)}=-\infty$ and $v_{(T+1)}=\infty$. The algorithm follows the sketch given in Section \ref{sec:preview}, with a few additional finite-sample corrections.

\begin{algorithm}[t]
\caption{Inference via task exchangeability}
\label{alg:main_algo}
\begin{algorithmic}[1]
\Require current task $\cT^*$, historical task--dataset pairs $(\cT_1,S_1),\dots,(\cT_T,S_T)$, synthetic data generator $\cG$, synthetic sample size $N$, error levels $\alpha_1,\alpha_2,\alpha_3\in(0,1)$
\State Draw $N$ synthetic samples for the current task: $\tilde S\stackrel{\mathrm{i.i.d.}}{\sim}\tilde\cP$, where $\tilde\cP=\cG_{\cT^*}$
\State Draw $N$ synthetic samples for each historical task: $\tilde S_j\stackrel{\mathrm{i.i.d.}}{\sim}\tilde\cP_j$, where $\tilde\cP_j=\cG_{\cT_j}$, for all $j\in[T]$
\State Compute a confidence interval for the current synthetic-data target $\tilde \theta$:
$[\tilde L,\tilde U]=\mathrm{CI}^{\theta^*,\alpha_1}(\tilde S)$
\State For each historical task $j\in[T]$, compute a confidence interval for the gap $\Delta_j$:
$
[\hat\Delta_j^L,\hat\Delta_j^U]=\Delta^{\theta_j,\alpha_2}(S_j,\tilde S_j)$
\State Let $\hat\Delta^L=\hat\Delta_{(k_L)}^L$ and $\hat\Delta^U= \hat\Delta_{(k_U)}^U$, where $k_L=\lfloor (T+1)\frac{\alpha_3}{2}\rfloor$, and $k_U=\lceil (T+1)(1-\frac{\alpha_3}{2})\rceil$
\Ensure Confidence interval for $\theta^*$:
$\ci=[\tilde L+\hat\Delta^L,\tilde U+\hat\Delta^U]$
\end{algorithmic}
\end{algorithm}

\begin{theorem}
\label{thm:main_thm}
Suppose Assumptions~\ref{ass:task_exchangeability} and~\ref{ass:classical_methods} hold. Then, the confidence interval $\ci$ output by Algorithm~\ref{alg:main_algo} satisfies
\[
P(\theta^*\in\ci)\geq 1-(\alpha_1+\alpha_2+\alpha_3).
\]
\end{theorem}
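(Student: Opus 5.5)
Since $\theta^*=\tilde\theta+\Delta_{T+1}$, we have $\theta^*\in\ci$ whenever both $\tilde\theta\in[\tilde L,\tilde U]$ and $\Delta_{T+1}\in[\hat\Delta^L,\hat\Delta^U]$ hold. The plan is to bound the failure probability of each of three events and take a union bound. Let $E_1=\{\tilde\theta\notin[\tilde L,\tilde U]\}$; Assumption~\ref{ass:classical_methods} gives $P(E_1)\le\alpha_1$ directly, after conditioning on $\cT^*$ so that $\tilde S$ is i.i.d.\ from the fixed $\tilde\cP=\cG_{\cT^*}$. The work is in controlling $\{\Delta_{T+1}\notin[\hat\Delta^L,\hat\Delta^U]\}$ with total error $\alpha_2+\alpha_3$.

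The key device is to introduce the \emph{counterfactual} gap interval for the current task,
\[
[\hat\Delta_{T+1}^L,\hat\Delta_{T+1}^U]=\Delta^{\theta^*,\alpha_2}(S_{T+1},\tilde S_{T+1}),
\]
where $\tilde S_{T+1}$ is an independent synthetic draw of size $N$ from $\tilde\cP$. This interval is never computed; it exists only for the analysis.
\begin{itemize}
\item Let $E_2=\{\Delta_{T+1}\notin[\hat\Delta_{T+1}^L,\hat\Delta_{T+1}^U]\}$. Conditional on $\cT_{T+1}$, Assumption~\ref{ass:classical_methods} applies, so $P(E_2)\le\alpha_2$.
\item Let $E_3=\{\hat\Delta_{T+1}^U>\hat\Delta^U\}\cup\{\hat\Delta_{T+1}^L<\hat\Delta^L\}$.
\end{itemize}
On $E_2^c\cap E_3^c$ we get $\hat\Delta^L\le\hat\Delta^L_{T+1}\le\Delta_{T+1}\le\hat\Delta^U_{T+1}\le\hat\Delta^U$, which is what we need.

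To bound $E_3$, I would show that the pairs $Z_j=(\hat\Delta_j^L,\hat\Delta_j^U)$ for $j=1,\dots,T+1$ are exchangeable. Each $Z_j$ is the same fixed function of $(\cT_j,S_j,\tilde S_j)$. The synthetic samples are generated by the same generator $\cG$ independently across tasks given the tasks, so $\tilde S_j$ depends only on $\cT_j$ plus independent randomness. Hence exchangeability of $(\cT_j,S_j)$ from Assumption~\ref{ass:task_exchangeability} lifts to exchangeability of the triples $(\cT_j,S_j,\tilde S_j)$, and therefore of the $Z_j$. Checking this lifting cleanly is the main technical point: formally it amounts to writing $\tilde S_j=g(\cT_j,U_j)$ with $U_j$ i.i.d.\ and independent of everything else.

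With exchangeability in hand, the standard conformal rank argument applies. For the upper end, $\hat\Delta^U_{T+1}>\hat\Delta^U_{(k_U)}$, the $k_U$-th order statistic among the first $T$, implies that $\hat\Delta^U_{T+1}$ is strictly among the top $T+1-k_U$ of all $T+1$ values. By exchangeability (handling ties conservatively), this has probability at most $(T+1-k_U)/(T+1)\le\alpha_3/2$, using $k_U\ge(T+1)(1-\alpha_3/2)$. If $k_U>T$, then $\hat\Delta^U=\infty$ and the event is empty. Symmetrically, the lower failure has probability at most $k_L/(T+1)\le\alpha_3/2$, and if $k_L=0$ the lower bound is $-\infty$. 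So $P(E_3)\le\alpha_3$.

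A union bound over $E_1,E_2,E_3$ then gives $P(\theta^*\notin\ci)\le\alpha_1+\alpha_2+\alpha_3$. The ties issue in the rank step is routine: strict inequality relative to the order statistic already yields the bound without randomization.
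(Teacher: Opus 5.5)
Your proof is correct and follows essentially the same route as the paper. It uses a counterfactual gap interval for the current task, lifts Assumption~\ref{ass:task_exchangeability} to exchangeability of the triples $(\cT_j,S_j,\tilde S_j)$, applies the uniform-rank bound at level $\alpha_3/2$ on each tail, and closes with a union bound over the three failure events. Building the counterfactual interval from an independent synthetic copy rather than reusing $\tilde S$, as the paper does, is harmless because the union bound needs no independence between events; your treatment of ties and of the edge cases $k_L=0$, $k_U=T+1$ is more explicit than the paper's.
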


\begin{proof}
First, by the validity of $\mathrm{CI}^{\theta^*,\alpha_1}$ in Assumption~\ref{ass:classical_methods}, we have 
\begin{equation}
\label{eq:alpha_1}
P(\tilde\theta\in[\tilde L,\tilde U])\geq 1-\alpha_1.
\end{equation}

Next, we show that $[\hat\Delta^L,\hat\Delta^U]$ covers the current gap $\Delta_{T+1}=\theta^*-\tilde\theta$ with probability at least $1-\alpha_2-\alpha_3$. Given the counterfactual dataset $S_{T+1}=S$ and $\tilde S_{T+1}=\tilde S$, let
\[
[\hat\Delta_{T+1}^L,\hat\Delta_{T+1}^U]
=
\Delta^{\theta^*,\alpha_2}(S,\tilde S).
\]
By Assumption~\ref{ass:classical_methods}, this interval satisfies
\begin{equation}
\label{eq:alpha_2}
P(\Delta_{T+1}\in[\hat\Delta_{T+1}^L,\hat\Delta_{T+1}^U])\geq 1-\alpha_2.
\end{equation}

We now compare the unobserved current gap interval $[\hat\Delta_{T+1}^L,\hat\Delta_{T+1}^U]$ to the observed historical gap intervals $[\hat\Delta_{j}^L,\hat\Delta_{j}^U]$. By task exchangeability, $(\cT_1,S_1),\dots,(\cT_{T+1},S_{T+1})$ are exchangeable. Since each synthetic dataset $\tilde S_j$ is drawn from a distribution determined by $\cT_j$, independently according to the same generator $\cG$ conditional on the tasks, the triples $(\cT_j,S_j,\tilde S_j)$ are also exchangeable. Finally, because $[\hat\Delta_j^L,\hat\Delta_j^U]$ is a deterministic function of $(\cT_j,S_j,\tilde S_j)$, the sequence of intervals $([\hat\Delta_j^L,\hat\Delta_j^U])_{j=1}^{T+1}$ is exchangeable.

Recall a fundamental consequence of exchangeability: given an exchangeable sequence $v_1,\dots,v_{T+1}$, $v_{T+1}$ has a uniform rank among $v_1,\dots,v_{T+1}$. This implies that, for any $\alpha\in(0,1)$, the probability that $\hat\Delta_{T+1}^L$ is among the smallest $\alpha$-fraction of $\hat\Delta_j^L$ is at most $\alpha$. More formally:
\[
P\left(\hat\Delta_{T+1}^L < \hat\Delta^L\right) = P\left(\hat\Delta_{T+1}^L < \hat\Delta^L_{(k_L)} \right)\leq \frac{\alpha_3}{2},
\]
where $k_L=\lfloor (T+1)\frac{\alpha_3}{2}\rfloor$. Similarly, since $\hat\Delta^U = \hat\Delta^U_{(k_U)}$, with $k_U=\lceil (T+1)(1-\frac{\alpha_3}{2})\rceil$, exchangeability gives
\[
P\left(\hat\Delta_{T+1}^U > \hat\Delta^U \right)\leq \frac{\alpha_3}{2}.
\]
Combining the two displays by a union bound, we obtain \begin{equation}
\label{eq:alpha_3}
P\left([\hat\Delta_{T+1}^L,\hat\Delta_{T+1}^U]\subseteq[\hat\Delta^L,\hat\Delta^U]\right)\geq 1-\alpha_3.
\end{equation}
Therefore, combining equations~\eqref{eq:alpha_2} and \eqref{eq:alpha_3} through another union bound gives
\begin{equation}
\label{eq:alpha_23}
P\left(\Delta_{T+1}\in[\hat\Delta^L,\hat\Delta^U]\right)
\geq
1-\alpha_2-\alpha_3.
\end{equation}
On the event that $\tilde\theta\in[\tilde L,\tilde U]$ and $\Delta_{T+1}\in[\hat\Delta^L,\hat\Delta^U]$, we have \[\theta^*=\tilde\theta+\Delta_{T+1}\in[\tilde L+\hat\Delta^L,\tilde U+\hat\Delta^U]=\ci.\]
Therefore, combining equations \eqref{eq:alpha_1} and \eqref{eq:alpha_23} yields
\[
P(\theta^*\in\ci) \geq P\left(\tilde\theta\in[\tilde L,\tilde U],\  \Delta_{T+1}\in[\hat\Delta^L,\hat\Delta^U]\right) \geq 1-(\alpha_1+\alpha_2+\alpha_3).\]
\end{proof}

As a rule of thumb, we can take $\alpha_1$ to be a tiny fraction of the error budget, since typically we can generate a large, virtually unlimited number of data points using the generator $\cG$. Therefore, the error budget should mostly be split between $\alpha_2$ and $\alpha_3$.

A similar exchangeability argument can be used to infer the finite-sample target $\theta^*(S)$, rather than the full population target $\theta^*(\cP^*)$. This algorithm is, in fact, simpler than Algorithm \ref{alg:main_algo}, and does not require splitting the error budget. Moreover, its coverage is within $[1-\alpha, 1-\alpha + \frac{2}{T+1}]$ under a mild assumption, meaning the method does not conservatively overcover. We include the details in Section \ref{sec:finite_sample_target}.

\paragraph{A better point estimate.}
The task-exchangeability framework also suggests a natural point estimate of the target estimand $\theta^*$.
For each historical task $j \in [T]$, let
\[
\hat \Delta_j =  \theta_j(S_j) - \theta_j(\tilde S_j)
\]
denote an estimate of the bias $\Delta_j = \theta_j - \tilde \theta_j$, computed from the real and synthetic data available for that task. We then estimate the target-task bias by the historical average
\[
\bar \Delta = \frac{1}{T}\sum_{j=1}^T \hat \Delta_j,
\]
and define the bias-corrected estimator
\[
\hat \theta = \theta^*(\tilde S) + \bar \Delta.
\]

This estimator is motivated by the same exchangeability assumption as the interval procedure. Since the tasks are drawn from the same meta-distribution, we have $\E[\theta_j(\tilde S_j)] = \E[\theta^*(\tilde S)]$, and thus $\hat\theta$ is unbiased for the finite-sample target $\theta^*(S)$, in the sense that $\E[\hat\theta - \theta^*(S)] = 0$.

\section{Extensions}

We discuss several extensions of our main approach from the previous section. In particular, we consider inference beyond exchangeability, multi-dimensional inference targets, inference on the finite-sample target $\theta^*(S)$, and how to adapt the approach when some amount of real data from the target task is available.

\subsection{Valid inference beyond exchangeability}
\label{sec:beyond_exchangeability}

The guarantee in Theorem~\ref{thm:main_thm} relies on exchangeability between the historical tasks and the current task. Often, this assumption may be too strong: the historical tasks may be informative about the current task without being exactly exchangeable with it. We now show that the same calibration idea yields a valid, though weakened, guarantee under approximate exchangeability.

In doing so, we study a more general procedure. Algorithm~\ref{alg:main_algo} calibrates the current real--synthetic gap using the empirical distribution of the historical gaps. We now describe a weighted version of this calibration step. The weights allow the user to specify which historical tasks are most relevant to the current task. The unweighted procedure from Algorithm~\ref{alg:main_algo} is recovered as a special case by assigning equal weights to all historical tasks.

Let $w_1,\dots,w_T\in[0,1]$ be fixed weights assigned to the historical tasks, and define normalized weights
\[
\bar w_i=\frac{w_i}{1+\sum_{j=1}^T w_j},
\quad i\in[T],
\qquad
\bar w_{T+1}=\frac{1}{1+\sum_{j=1}^T w_j}.
\]
The condition $w_i\leq 1$ ensures $\bar w_i\leq \bar w_{T+1}$, so the current task receives at least as much weight as any individual historical task.

We consider setting $\hat \Delta^L$ and $\hat \Delta^U$ as \emph{weighted} quantiles of the historical gaps. For a probability distribution $\nu$ on the real line, let $Q_\gamma(\nu)$ denote its $\gamma$-quantile.\footnote{When the $\gamma$-quantile is not unique, we use the largest possible $\gamma$-quantile for the lower endpoints and the smallest possible $\gamma$-quantile for the upper endpoints.} We define the weighted endpoints at error level $\gamma$ as
\[
\hat\Delta^L
=
Q_{\gamma}
\left(
\sum_{j=1}^T \bar w_j \delta_{\hat\Delta_j^L}
+
\bar w_{T+1}\delta_{-\infty}
\right), \qquad \hat\Delta^U
=
Q_{1-\gamma}
\left(
\sum_{j=1}^T \bar w_j \delta_{\hat\Delta_j^U}
+
\bar w_{T+1}\delta_{+\infty}
\right).
\]
The atoms at $+\infty$ and $-\infty$ play the same role as the finite-sample order-statistic conventions in Algorithm~\ref{alg:main_algo}; they account for the fact that the current-task endpoint is unobserved. We state the full procedure in Algorithm \ref{alg:weighted_exchangeability}. The only difference compared to Algorithm~\ref{alg:main_algo} is the weighted computation of $\hat\Delta^L$ and $\hat\Delta^U$.

Let
\[
V^L = (\hat\Delta_1^L,\dots,\hat\Delta_T^L,\hat\Delta_{T+1}^L),
\qquad
V^U = (\hat\Delta_1^U,\dots,\hat\Delta_T^U,\hat\Delta_{T+1}^U)
\]
denote the vectors of lower and upper endpoints of the confidence intervals for the gaps $\Delta_j$, including the counterfactual endpoints for the current task. For each $i\in[T]$ let $V^{U,i}$ denote the vector obtained by swapping the $i$-th and $(T+1)$-st entries of $V^U$, and define $V^{L,i}$ analogously. Let
\[
\varepsilon_L
=
\sum_{i=1}^T
\bar w_i
d_{\mathrm{TV}}\left(V^L,V^{L,i}\right), \qquad
\varepsilon_U
=
\sum_{i=1}^T
\bar w_i
d_{\mathrm{TV}}\left(V^U,V^{U,i}\right).
\]
Here, $d_{\mathrm{TV}}$ denotes the total variation distance between the distributions of its arguments.
The quantities $\varepsilon_L$ and $\varepsilon_U$ measure the extent to which the endpoint vectors fail to be invariant to swapping the current task with a weighted historical task. Under exact exchangeability, both $\varepsilon_L$ and $\varepsilon_U$ are zero.

\begin{algorithm}[t]
\caption{Weighted inference beyond task exchangeability}
\label{alg:weighted_exchangeability}
\begin{algorithmic}[1]
\Require current task $\cT^*$, historical task--dataset pairs $(\cT_1,S_1),\dots,(\cT_T,S_T)$, synthetic data generator $\cG$, synthetic sample size $N$, error levels $\alpha_1,\alpha_2,\alpha_3\in(0,1)$, weights $w_1,\dots,w_T\in[0,1]$
\State Draw $N$ synthetic samples for the current task: $\tilde S\stackrel{\mathrm{i.i.d.}}{\sim}\tilde\cP$, where $\tilde\cP=\cG_{\cT^*}$
\State Draw $N$ synthetic samples for each historical task: $\tilde S_j\stackrel{\mathrm{i.i.d.}}{\sim}\tilde\cP_j$, where $\tilde\cP_j=\cG_{\cT_j}$, for all $j\in[T]$
\State Compute a confidence interval for the current synthetic-data target $\tilde\theta$:
$[\tilde L,\tilde U]=\mathrm{CI}^{\theta^*,\alpha_1}(\tilde S)$
\State For each historical task $j\in[T]$, compute a confidence interval for the gap $\Delta_j$:
$
[\hat\Delta_j^L,\hat\Delta_j^U]=\Delta^{\theta_j,\alpha_2}(S_j,\tilde S_j)
$
\State Define normalized weights
$
\bar w_i=\frac{w_i}{1+\sum_{j=1}^T w_j}
$
for $i\in[T]$, and
$
\bar w_{T+1}=\frac{1}{1+\sum_{j=1}^T w_j}
$
\State Let
$\hat\Delta^L
=
Q_{\frac{\alpha_3}{2}}
\left(
\sum_{j=1}^T \bar w_j\delta_{\hat\Delta_j^L}
+
\bar w_{T+1}\delta_{-\infty}
\right)$ and $\hat\Delta^U
=
Q_{1-\frac{\alpha_3}{2}}
\left(
\sum_{j=1}^T \bar w_j\delta_{\hat\Delta_j^U}
+
\bar w_{T+1}\delta_{+\infty}
\right)$
\Ensure Confidence interval for $\theta^*$:
$\ci=[\tilde L+\hat\Delta^L,\tilde U+\hat\Delta^U]$
\end{algorithmic}
\end{algorithm}

\begin{theorem}
\label{thm:beyond_exchangeability}
Suppose Assumption~\ref{ass:classical_methods} holds. Then, the confidence interval $\ci$ output by Algorithm~\ref{alg:weighted_exchangeability} satisfies
\[
P(\theta^*\in\ci)
\geq
1-(\alpha_1+\alpha_2+\alpha_3)-(\varepsilon_L+\varepsilon_U).
\]
\end{theorem}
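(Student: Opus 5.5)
We follow the three-part decomposition from the proof of Theorem~\ref{thm:main_thm}. The only step that changes is the exchangeability step, which we replace by a weighted-quantile argument in the style of \citet{barber2023conformal}.

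\textbf{Unchanged steps.} As before, $P(\tilde\theta\in[\tilde L,\tilde U])\geq 1-\alpha_1$ by Assumption~\ref{ass:classical_methods}. Next, define the counterfactual interval $[\hat\Delta_{T+1}^L,\hat\Delta_{T+1}^U]=\Delta^{\theta^*,\alpha_2}(S,\tilde S)$. It covers $\Delta_{T+1}$ with probability at least $1-\alpha_2$; this needs only Assumption~\ref{ass:classical_methods}, not exchangeability. It then suffices to show
\[
P(\hat\Delta_{T+1}^L<\hat\Delta^L)\leq \tfrac{\alpha_3}{2}+\varepsilon_L,
\qquad
P(\hat\Delta_{T+1}^U>\hat\Delta^U)\leq \tfrac{\alpha_3}{2}+\varepsilon_U .
\]
Union bounds identical to those in the proof of Theorem~\ref{thm:main_thm} then give the stated guarantee.

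\textbf{Upper endpoint, deterministic step.} The lower endpoint is symmetric, so we treat only $V=V^U$. For a vector $v\in\R^{T+1}$, set
\[
q(v)=Q_{1-\gamma}\Big(\sum_{j=1}^{T+1}\bar w_j\delta_{v_j}\Big),
\]
the weighted quantile that includes the current entry as an atom, with $\gamma=\alpha_3/2$. We use the standard fact from weighted conformal prediction: $v_{T+1}>\hat\Delta^U$ if and only if $v_{T+1}>q(v)$. Indeed, replacing the atom $+\infty$ by $v_{T+1}$ moves the quantile only when $v_{T+1}$ lies above the threshold, and then it does not cross it. Call $i$ a \emph{strange} index of $v$ if $v_i>q(v)$. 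Because the atom $+\infty$ in the algorithm plays the role of $v_{T+1}$, the definition of the quantile gives
\[
\sum_{i=1}^{T+1}\bar w_i\,\mathbbm{1}\{v_i>q(v)\}\leq \gamma
\]
for every $v$. Moreover $q$ is invariant when entries $i$ and $T+1$ are swapped only up to the weights, so we instead use the quantile of the swapped vector.

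\textbf{Upper endpoint, probabilistic step.} Write
\[
P(V_{T+1}>q(V))=\sum_{i=1}^{T+1}\bar w_i\,P(V_{T+1}>q(V)).
\]
For each $i\leq T$, swapping coordinates shows that $V_{T+1}>q(V)$ has the same form as the event $V^{i}_{i}>q'(V^{i})$, where $q'$ is the weighted quantile with weights $\bar w$ permuted accordingly. Changing the distribution of $V$ to that of $V^{U,i}$ costs at most $d_{\mathrm{TV}}(V^U,V^{U,i})$. This yields
\[
P(V_{T+1}>\hat\Delta^U)\leq \sum_{i=1}^{T+1}\bar w_i\,P(V_i>\tilde q_i(V))+\sum_{i=1}^T \bar w_i\, d_{\mathrm{TV}}(V^U,V^{U,i}),
\]
where $\tilde q_i$ puts weight $\bar w_{T+1}$ on coordinate $i$ and $\bar w_i$ on coordinate $T+1$.

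\textbf{Main obstacle.} The hardest step is bounding the first sum by $\gamma$, given that the weights are permuted. This is where $w_i\leq 1$, i.e.\ $\bar w_i\leq\bar w_{T+1}$, enters. We first try the monotonicity argument of \citet{barber2023conformal}. For fixed $v$, let $K=\max_j \tilde q_j(v)$'s relevant threshold, namely the quantile computed with the atom of weight $\bar w_{T+1}$ placed at $+\infty$. Then each $\tilde q_i(v)$ is at least the common quantile $Q_{1-\gamma}\big(\sum_{j\leq T}\bar w_j\delta_{v_j}+\bar w_{T+1}\delta_{+\infty}\big)$ with the appropriate coordinate removed. As a result, every strange event $\{v_i>\tilde q_i(v)\}$ implies $v_i>q^{+}(v)$ for a single threshold $q^{+}$ of weighted mass at least $1-\gamma$. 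Summing the weights of indices exceeding $q^{+}$ then gives at most $\gamma$.

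If this monotonicity fails directly, the fallback is Barber et al.'s exact device. Draw a random index $K$ with $P(K=i)=\bar w_i$, swap it into position $T+1$, and note that the algorithm's weighted quantile with the $+\infty$ atom dominates the quantile of the swapped empirical measure. The bound $\bar w_i\leq\bar w_{T+1}$ is what ensures this domination.
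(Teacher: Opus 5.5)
\textbf{Gap.} Your reduction is sound up to exactly the point you flag. The following steps are all correct:
\begin{itemize}
\item the $\alpha_1$/$\alpha_2$ steps;
\item the equivalence $v_{T+1}>\hat\Delta^U\iff v_{T+1}>q(v)$;
\item the mass bound for strange indices;
\item the swap giving $\sum_i\bar w_i P(V_i>\tilde q_i(V))+\varepsilon_U$.
\end{itemize}
The equivalence is true, but your one-line justification has the cases reversed. When $v_{T+1}>\hat\Delta^U$ the quantile is unchanged; otherwise it may drop but stays $\ge v_{T+1}$.

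The gap is bounding $\sum_i\bar w_i P(V_i>\tilde q_i(V))$ by $\gamma$, and your monotonicity paragraph does not do it. You claim each $\tilde q_i(v)$ is at least the quantile computed with the atom at $+\infty$. This runs the wrong way, since adding mass at $+\infty$ can only raise a quantile. For example, with $T=1$, $w_1=1$, $v=(0,5)$, $\gamma=0.1$, one has $\tilde q_1(v)=5$ while that quantile is $+\infty$. The single threshold $q^+$ is also never specified. Nor is there a pointwise inequality $\tilde q_i(v)\ge q(v)$ to fall back on. When $v_i<v_{T+1}$, passing from $q$ to $\tilde q_i$ moves mass downward and can lower the quantile strictly.

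\textbf{Repairs.} Two repairs work.

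The first is your fallback, which is precisely the paper's proof. However, it has to be applied \emph{before} the TV swap, not to the permuted-weight thresholds you end up with.
\begin{itemize}
\item Write $q$ for the quantile with unpermuted weights. The measure $\sum_j\bar w_j\delta_{(V^{U,i})_j}$ is the algorithm's measure with the atom $\bar w_{T+1}\delta_{+\infty}$ moved down: mass $\bar w_{T+1}-\bar w_i\ge 0$ goes to $\hat\Delta_i^U$ and mass $\bar w_i$ goes to $\hat\Delta_{T+1}^U$.
\item Hence $q(V^{U,i})\le\hat\Delta^U$, so $\{\hat\Delta^U_{T+1}>\hat\Delta^U\}\subseteq\{V^{U,i}_i>q(V^{U,i})\}$ for every $i$.
\item Average with weights $\bar w_i$, then swap back at cost $d_{\mathrm{TV}}(V^U,V^{U,i})$. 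This lands on $\sum_i\bar w_iP(V^U_i>q(V^U))\le\gamma$, which has a common threshold.
\end{itemize}

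The second keeps your ordering but replaces threshold monotonicity with an event-level statement.
\begin{itemize}
\item For finitely supported $\nu$, $x>Q_{1-\gamma}(\nu)$ iff $\nu((-\infty,x))\ge 1-\gamma$.
\item The measure $\tilde\mu_i$ defining $\tilde q_i(v)$ is $\mu=\sum_j\bar w_j\delta_{v_j}$ with mass $\bar w_{T+1}-\bar w_i\ge 0$ moved from $v_{T+1}$ onto $v_i$ itself.
\item Therefore $\tilde\mu_i((-\infty,v_i))\le\mu((-\infty,v_i))$, i.e.\ $v_i>\tilde q_i(v)\Rightarrow v_i>q(v)$, and the sum is at most $\gamma$.
\end{itemize}
In both versions, $w_i\le 1$ is used exactly to make the relocated mass nonnegative.
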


Theorem~\ref{thm:beyond_exchangeability} recovers the guarantee from Theorem \ref{thm:main_thm} under task exchangeability, in which case $\varepsilon_L=\varepsilon_U=0$. More generally, the theorem gives a robustness statement: violations of exchangeability degrade coverage through the weighted total-variation penalties. Thus, choosing weights that concentrate on historical tasks closer to the current task can tighten the guarantee. Theorem~\ref{thm:beyond_exchangeability} is inspired by an analogous theorem bounding the coverage of weighted conformal prediction beyond exchangeability \cite{barber2023conformal}.

\subsection{Multi-dimensional targets}
\label{sec:multi_dimensional_targets}

Algorithm~\ref{alg:main_algo} was stated for scalar targets. We now describe the corresponding construction when the target is vector-valued. Suppose that $\Theta\subseteq\R^d$, so that
\[
\theta_j(\cP_j),\; \theta_j(\tilde\cP_j),\; \Delta_j
\in\R^d.
\]
We use the same notation $\mathrm{CI}^{\theta,\alpha}(S)$ and $\Delta^{\theta,\alpha}(S,S')$ as in Assumption~\ref{ass:classical_methods}, but interpret them as set-valued confidence regions in $\R^d$ rather than intervals. Thus, for example, $\Delta^{\theta,\alpha}(S,S')$ is a confidence region for the vector $\theta(\cP)-\theta(\cP')$. We assume these regions are convex and compact.

The main difference from the scalar case is that there is no canonical ordering of vectors, and hence no canonical lower and upper quantile for $\Delta_j$. We therefore avoid defining a multivariate quantile directly. Instead, we choose a nested family of convex regions and reduce each historical gap confidence region to one scalar score.

Let $\{\D_r:r\in\R\}$ be a fixed family of subsets of $\R^d$ satisfying the following conditions: for each finite $r$, $\D_r$ is convex and compact; if $r\leq r'$, then $\D_r\subseteq\D_{r'}$; $\bigcup_{r\in\R}\D_r=\R^d$; and for every compact set $K\subseteq\R^d$, the value
\[
R(K):=\inf\{r:K\subseteq\D_r\}
\]
satisfies $K\subseteq\D_{R(K)}$. The last condition is a mild right-continuity condition on the family. We also use the convention $\D_{+\infty}=\R^d$.
For a point $\delta\in\R^d$, the associated scalar score is
\[
\rho(\delta)=\inf\{r:\delta\in\D_r\}.
\]
According to Assumption~\ref{ass:classical_methods}, we can compute the multivariate confidence region for the difference between the target functional evaluated under synthetic distribution and the true distribution, 
\[
K_j :=\Delta^{\theta_j,\alpha_2}(S_j,\tilde S_j),
\]
then define
\[
R_j
=
R(K_j)
=
\inf\{r:K_j\subseteq\D_r\}.
\]
Whenever $\Delta_j\in K_j$, we have $\rho(\Delta_j)\leq R_j$. Thus, calibration based on exchangeability of the scalar scores $R_j$ gives a valid convex prediction region for the unobserved current gap $\Delta_{T+1}$.

For sets $A,B\subseteq\R^d$, write
\[
A\oplus B=\{a+b:a\in A,\; b\in B\}
\]
for their Minkowski sum. The multidimensional procedure is given in Algorithm~\ref{alg:multi_dimensional_targets}. As before, $R_{(k)}$ denotes the $k$-th order statistic of $R_1,\dots,R_T$, with the convention $R_{(T+1)}=\infty$.

\begin{algorithm}[t]
\caption{Inference via task exchangeability for multi-dimensional targets}
\label{alg:multi_dimensional_targets}
\begin{algorithmic}[1]
\Require current task $\cT^*$, historical task--dataset pairs $(\cT_1,S_1),\dots,(\cT_T,S_T)$, synthetic data generator $\cG$, synthetic sample size $N$, error levels $\alpha_1,\alpha_2,\alpha_3\in(0,1)$, nested convex family $\{\D_r:r\in\R\}$
\State Draw $N$ synthetic samples for the current task: $\tilde S\stackrel{\mathrm{i.i.d.}}{\sim}\tilde\cP$, where $\tilde\cP=\cG_{\cT^*}$
\State Draw $N$ synthetic samples for each historical task: $\tilde S_j\stackrel{\mathrm{i.i.d.}}{\sim}\tilde\cP_j$, where $\tilde\cP_j=\cG_{\cT_j}$, for all $j\in[T]$
\State Compute a confidence region for the current synthetic-data target $\tilde\theta$:
$\widetilde {\mathrm{CI}}=\mathrm{CI}^{\theta^*,\alpha_1}(\tilde S)$
\State For each historical task $j\in[T]$, compute a confidence region for the gap $\Delta_j$:
$K_j=\Delta^{\theta_j,\alpha_2}(S_j,\tilde S_j)$
\State Compute the scalar score $R_j=\inf\{r:K_j\subseteq\D_r\}$ for each $j\in[T]$
\State Let $k=\lceil (T+1)(1-\alpha_3)\rceil$ and let $\hat R=R_{(k)}$
\Ensure Confidence region for $\theta^*$:
$\ci=\widetilde{\mathrm{CI}} \oplus\D_{\hat R}$
\end{algorithmic}
\end{algorithm}

\begin{theorem}
\label{thm:multi_dimensional_targets}
Suppose Assumptions~\ref{ass:task_exchangeability} and~\ref{ass:classical_methods} hold. Then, the confidence region $\ci$ output by Algorithm~\ref{alg:multi_dimensional_targets} satisfies
\[
P(\theta^*\in\ci)\geq 1-(\alpha_1+\alpha_2+\alpha_3).
\]
\end{theorem}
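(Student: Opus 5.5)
The plan is to mirror the proof of Theorem~\ref{thm:main_thm}, replacing the two scalar endpoint quantiles with the single scalar score $R_j$. Extend the notation to the current task: let $K_{T+1}=\Delta^{\theta^*,\alpha_2}(S,\tilde S)$ be the counterfactual gap region, and let $R_{T+1}=R(K_{T+1})$. Three events carry the argument:
\begin{itemize}
\item $E_1=\{\tilde\theta\in\widetilde{\mathrm{CI}}\}$, which has probability at least $1-\alpha_1$ by Assumption~\ref{ass:classical_methods};
\item $E_2=\{\Delta_{T+1}\in K_{T+1}\}$, which has probability at least $1-\alpha_2$ by the same assumption;
\item $E_3=\{R_{T+1}\le \hat R\}$.
\end{itemize}
A union bound then gives $P(E_1\cap E_2\cap E_3)\ge 1-(\alpha_1+\alpha_2+\alpha_3)$, provided we establish $P(E_3)\ge 1-\alpha_3$ and show that the intersection forces $\theta^*\in\ci$.

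For $E_3$, I would reuse the exchangeability step of Theorem~\ref{thm:main_thm}.
\begin{itemize}
\item Task exchangeability holds by Assumption~\ref{ass:task_exchangeability}. Each $\tilde S_j$ is generated conditionally independently from $\cG_{\cT_j}$, so the triples $(\cT_j,S_j,\tilde S_j)$ are exchangeable.
\item Each $R_j$ is the same deterministic map applied to its triple: form $K_j$, then apply $R(\cdot)$ with the fixed family $\{\D_r\}$. Hence $R_1,\dots,R_{T+1}$ are exchangeable.
\item $R_j$ is finite, because $K_j$ is compact and $\bigcup_r \D_r=\R^d$. To see this, cover the compact set $K_j$ by the increasing family of convex sets $\D_r$, so that some single $\D_r$ contains it. This finiteness also needs a word of care.
\item The standard rank argument then gives $P(R_{T+1}>R_{(k)})\le \alpha_3$ for $k=\lceil (T+1)(1-\alpha_3)\rceil$, with the convention $R_{(T+1)}=\infty$ when $k=T+1$.
\item Ties only help, since the event is a strict inequality: the number of indices $j\le T+1$ with $R_j\ge R_{T+1}$ is at least $T+1-k+1$.
\end{itemize}

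It remains to show the deterministic inclusion on $E_1\cap E_2\cap E_3$. On $E_2$ we have $\Delta_{T+1}\in K_{T+1}$. The right-continuity condition on the family gives $K_{T+1}\subseteq \D_{R_{T+1}}$. Nestedness, together with $R_{T+1}\le\hat R$ on $E_3$, then gives $\D_{R_{T+1}}\subseteq\D_{\hat R}$ (with $\D_{+\infty}=\R^d$ when $\hat R=\infty$). Hence $\Delta_{T+1}\in\D_{\hat R}$. On $E_1$, $\tilde\theta\in\widetilde{\mathrm{CI}}$, so
\[
\theta^*=\tilde\theta+\Delta_{T+1}\in\widetilde{\mathrm{CI}}\oplus\D_{\hat R}=\ci .
\]

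The main obstacle is this step, specifically the careful use of the right-continuity condition. Without it, the infimum $R(K)$ need not be attained, and $K_{T+1}$ might fail to lie in $\D_{R_{T+1}}$. This is exactly why that condition was imposed on the family. Convexity of the $\D_r$ is not needed for validity; it only ensures that the output region is convex.
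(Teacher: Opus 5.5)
Your proposal is correct and follows the paper's proof essentially step for step: the same three events, exchangeability of the scores $R_j$ inherited from the triples $(\mathcal{T}_j,S_j,\tilde S_j)$, the rank bound $P(R_{T+1}\le\hat R)\ge 1-\alpha_3$, then $\Delta_{T+1}\in K_{T+1}\subseteq\mathcal{D}_{\hat R}$, and finally the Minkowski-sum inclusion with union bounds. You are more explicit than the paper about invoking right-continuity, which is genuinely needed only when $R_{T+1}=\hat R$; your finiteness bullet, the one place convexity enters, is unnecessary, since the rank argument works for extended-real scores and the convention $\mathcal{D}_{+\infty}=\mathbb{R}^d$ handles $\hat R=+\infty$.
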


The choice of nested family determines the shape of the resulting confidence region. A useful default is a translated and scaled convex family. Let $B\subseteq\R^d$ be a convex compact set containing the origin in its interior, and define
\[
\D_r=c+rB,\qquad r\geq 0,
\]
with $\D_r=\emptyset$ for $r<0$. If $\gamma_B(x)=\inf\{r:x\in rB\}$, then
\[
R_j
=
\sup_{\delta\in K_j}\gamma_B(\delta-c).
\]
The set $B$ can be the Euclidean unit ball, or a $d$-dimensional rectangle, for example.


\subsection{Inference on the finite-sample target \texorpdfstring{$\theta^*(S)$}{target}}
\label{sec:finite_sample_target}
 
Algorithm~\ref{alg:main_algo} targets the population estimand $\theta^*=\theta^*(\cP^*)$. Sometimes, however, one might be satisfied with the finite-sample target $\theta^*(S)$, the functional evaluated on the (counterfactual) real dataset $S=S_{T+1}$ for the current task. This is the only quantity against which coverage can actually be checked; indeed, when we estimate coverage on real datasets in our experiments, we treat the held-out task's empirical value $\theta^*(S)$ as the ground truth, precisely because the population value $\theta^*(\cP^*)$ is never observed. We now show that inference on $\theta^*(S)$ admits a simpler and tighter procedure than Algorithm~\ref{alg:main_algo}.
 
The key observation is that, for the finite-sample target, the relevant discrepancy is between two \emph{observable} sample statistics rather than between two population functionals. For each task $j$, define the finite-sample gap
\[
\hat \Delta_j = \theta_j(S_j)-\theta_j(\tilde S_j),\qquad j\in[T],
\]
together with the unobserved current-task gap $\hat \Delta_{T+1}=\theta^*(S)-\theta^*(\tilde S)$. Unlike the population gap $\Delta_j=\theta_j(\cP_j)-\theta_j(\tilde\cP_j)$ calibrated in Algorithm~\ref{alg:main_algo}, each $\hat \Delta_j$ is computed directly from the data, with no confidence interval required. Two simplifications follow. First, the synthetic-data estimate $\theta^*(\tilde S)$ is observed exactly and enters as a point rather than through a confidence interval, so the $\alpha_1$ step of Algorithm~\ref{alg:main_algo} is no longer needed. Second, because the historical gaps $\hat \Delta_j$ are observed, no per-task gap interval is required, and the $\alpha_2$ (per-task gap interval) and $\alpha_3$ (application of exchangeability) steps of Algorithm~\ref{alg:main_algo} collapse into a single exchangeability-based calibration using the $\hat \Delta_j$. We therefore use the entire budget $\alpha$ on one application of exchangeability. The procedure is stated in Algorithm~\ref{alg:finite_sample_target}.
 
As before, we write $\hat \Delta_{(k)}$ for the $k$-th order statistic of $\hat \Delta_1,\dots,\hat \Delta_T$, with the conventions $\hat \Delta_{(0)}=-\infty$ and $\hat \Delta_{(T+1)}=+\infty$.
 
\begin{algorithm}[t]
\caption{Inference on the finite-sample target via task exchangeability}
\label{alg:finite_sample_target}
\begin{algorithmic}[1]
\Require current task $\cT^*$, historical task--dataset pairs $(\cT_1,S_1),\dots,(\cT_T,S_T)$, synthetic data generator $\cG$, synthetic sample size $N$, error level $\alpha\in(0,1)$
\State Draw $N$ synthetic samples for the current task: $\tilde S\stackrel{\mathrm{i.i.d.}}{\sim}\tilde\cP$, where $\tilde\cP=\cG_{\cT^*}$
\State Draw $N$ synthetic samples for each historical task: $\tilde S_j\stackrel{\mathrm{i.i.d.}}{\sim}\tilde\cP_j$, where $\tilde\cP_j=\cG_{\cT_j}$, for all $j\in[T]$
\State Compute the synthetic-data estimate for the current task: $\theta^*(\tilde S)$
\State For each historical task $j\in[T]$, compute the finite-sample gap:
$\hat \Delta_j=\theta_j(S_j)-\theta_j(\tilde S_j)$
\State Let $\hat \Delta^L=\hat \Delta_{(k_L)}$ and $\hat \Delta^U=\hat \Delta_{(k_U)}$, where $k_L=\lfloor (T+1)\frac{\alpha}{2}\rfloor$ and $k_U=\lceil (T+1)(1-\frac{\alpha}{2})\rceil$
\Ensure Confidence interval for $\theta^*(S)$:
$\ci=[\theta^*(\tilde S)+\hat \Delta^L,\ \theta^*(\tilde S)+\hat \Delta^U]$
\end{algorithmic}
\end{algorithm}
 
\begin{theorem}
\label{thm:finite_sample_target}
Suppose Assumption~\ref{ass:task_exchangeability} holds. Then, the confidence interval $\ci$ output by Algorithm~\ref{alg:finite_sample_target} satisfies
\[
P\left(\theta^*(S)\in\ci\right)\geq 1-\alpha.
\]
Moreover, if the finite-sample gaps $\hat\Delta_1,\dots,\hat\Delta_{T+1}$ are almost surely distinct, then
\[
P\left(\theta^*(S)\in\ci\right)
<
1-\alpha+\frac{2}{T+1}.
\]
\end{theorem}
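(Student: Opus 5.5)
Our plan is to reduce the statement to a rank argument for the exchangeable sequence $\hat\Delta_1,\dots,\hat\Delta_{T+1}$, as in the proof of Theorem~\ref{thm:main_thm}, but applied once.

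\textbf{Exchangeability of the gaps.} Conditional on the tasks, each synthetic dataset $\tilde S_j$ is drawn from $\cG_{\cT_j}$ by the same mechanism, independently across $j$. Combined with Assumption~\ref{ass:task_exchangeability}, this makes the triples $(\cT_j,S_j,\tilde S_j)_{j=1}^{T+1}$ exchangeable. Each $\hat\Delta_j=\theta_j(S_j)-\theta_j(\tilde S_j)$ is the same deterministic map applied to the $j$-th triple, where $\theta_j$ is part of $\cT_j$. Hence $(\hat\Delta_j)_{j=1}^{T+1}$ is exchangeable, including the unobserved $\hat\Delta_{T+1}=\theta^*(S)-\theta^*(\tilde S)$.

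\textbf{Lower bound on coverage.} The event $\theta^*(S)\in\ci$ is exactly $\hat\Delta_{(k_L)}\le\hat\Delta_{T+1}\le\hat\Delta_{(k_U)}$, where order statistics are taken over the $T$ historical values. We bound the two failure events separately.
\begin{itemize}
\item The event $\hat\Delta_{T+1}<\hat\Delta_{(k_L)}$ implies that $\hat\Delta_{T+1}$ is strictly smaller than at least $T+1-k_L$ of the other values. So its rank among all $T+1$ values, with ties broken in its favor, is at most $k_L$.
\item By the standard exchangeability rank lemma, valid with ties via a uniform random tie-breaking, this has probability at most $k_L/(T+1)\le\alpha/2$.
\item Symmetrically, $P(\hat\Delta_{T+1}>\hat\Delta_{(k_U)})\le (T+1-k_U)/(T+1)\le\alpha/2$, using $k_U\ge(T+1)(1-\alpha/2)$.
\item When $k_U=T+1$, the upper endpoint is $+\infty$ and the bound is trivial; likewise for $k_L=0$.
\end{itemize}
A union bound then gives coverage at least $1-\alpha$.

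\textbf{Upper bound on coverage.} Under almost sure distinctness, the rank $r$ of $\hat\Delta_{T+1}$ among all $T+1$ values is exactly uniform on $\{1,\dots,T+1\}$. Miscoverage on the left is exactly the event $r\le k_L$, since with distinct values $\hat\Delta_{T+1}<\hat\Delta_{(k_L)}$ iff at most $k_L-1$ historical values lie below it. Miscoverage on the right is exactly $r\ge k_U+1$. These events are disjoint, so
\[
P(\theta^*(S)\notin\ci)=\frac{k_L+(T+1-k_U)}{T+1}.
\]
Using $k_L>(T+1)\alpha/2-1$ and $k_U<(T+1)(1-\alpha/2)+1$, this exceeds $\alpha-2/(T+1)$, which gives coverage strictly less than $1-\alpha+2/(T+1)$. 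The boundary cases $k_L=0$ or $k_U=T+1$ need no separate treatment, because the same inequalities still hold.

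The main obstacle is the off-by-one bookkeeping: matching the strict and non-strict inequalities between ``$\hat\Delta_{T+1}<\hat\Delta_{(k_L)}$'' and the rank event, and handling ties in the first part. We would state one clean rank lemma with random tie-breaking and check both ends against it. The rest is routine.
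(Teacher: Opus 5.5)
Your proposal is correct and follows the paper's proof. Exchangeability of the triples makes the gaps exchangeable, the uniform-rank property bounds each tail by $\alpha/2$, and under distinctness the exact miscoverage $\frac{k_L+(T+1-k_U)}{T+1}$ gives the upper bound. The only differences are minor: you make the tie-handling explicit via random tie-breaking, which the paper leaves implicit, and you bound $k_L$ and $k_U$ separately, whereas the paper uses the identity $T+1-k_U=k_L$ to get the exact value $\frac{2k_L}{T+1}$.
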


Therefore, under a mild condition, coverage of the finite-sample target is tight around $1-\alpha$, and the slack decreases with the number of historical tasks.

\subsection{What if we have some real data from the target task?}
\label{sec:some_real_data}

So far, we have focused on the setting where no real data is available for the current task. In some cases, however, one may have a small real sample from the target distribution $\cP^*$. This creates a natural tradeoff. If the synthetic data is poorly specified, then inference should rely primarily on the real sample. Conversely, if the synthetic data is accurate for the current task, then it may be beneficial to rely more heavily on the synthetic sample, which can be generated at much larger scale.

A simple way to obtain this adaptivity is to construct two confidence intervals (at a discounted error level) and intersect them. First, using the available real data from the current task, construct a confidence interval $\CI^{\alpha_1}_{\mathrm{real}}$ satisfying
\[
P(\theta^*\in \CI^{\alpha_1}_{\mathrm{real}})\geq 1-\alpha_1.
\]
Then, using the synthetic-data procedure based on task exchangeability, construct a confidence interval $\CI^{\alpha_2}_{\mathrm{syn}}$ satisfying
\[
P(\theta^*\in \CI^{\alpha_2}_{\mathrm{syn}})\geq 1-\alpha_2.
\]
Finally, we take
\[
\overline{\CI}
=
\CI^{\alpha_1}_{\mathrm{real}}
\cap
\CI^{\alpha_2}_{\mathrm{syn}}.
\]
By a union bound,
\[
P(\theta^*\notin \overline{\CI})
\leq
P(\theta^*\notin \CI^{\alpha_1}_{\mathrm{real}})
+
P(\theta^*\notin \CI^{\alpha_2}_{\mathrm{syn}})
\leq
\alpha_1+\alpha_2.
\]
Thus, as long as $\alpha_1+\alpha_2=\alpha$, the intersected interval $\overline{\CI}$ has coverage at least $1-\alpha$.

This construction gives a simple way to interpolate between the real-data and synthetic-data intervals. One can set $\alpha_1=\lambda\alpha$ and $\alpha_2=(1-\lambda)\alpha$ for some $\lambda\in(0,1)$. Larger values of $\lambda$ allocate more error budget to the real-data interval, making that interval less conservative, while smaller values of $\lambda$ allocate more error budget to the synthetic-data interval. 

In practice, $\lambda$ can be fixed in advance or tuned using a held-out sample that is independent of the data used to construct the final intervals. A natural choice is to select the value of $\lambda$ that minimizes the length of the resulting intersection interval on the held-out sample. Conditional on the chosen $\lambda$, the final interval remains valid by the same union-bound argument, provided the final confidence intervals are constructed on data independent of the tuning step. More sophisticated approaches such as cross-fitting would lead to shorter intervals, but their validity would require additional stability or asymptotic arguments. We defer a thorough investigation of how to combine real and synthetic data for the target task to future work.

We note that the setup considered in this section is analogous to the setup studied by \citet{bashari2025statistical}. Their method, GESPI, includes a guardrail based on performing inference with real data only at a slightly more liberal error level $\alpha + \epsilon$, where $\alpha$ is the target level and $\epsilon$ is a small additional factor. As a result, GESPI is in the best case as powerful as standard inference on real data at level $\alpha + \epsilon$. Our method, on the other hand, can be far more powerful if the synthetic data generator is good, however it crucially relies on the assumption of task exchangeability (or, more generally, its coverage degrades with distance to exchangeability, as in Theorem \ref{thm:beyond_exchangeability}).

\section{Experiments}

We evaluate our approach to valid inference with synthetic data on several real-world datasets from public opinion research and AI evaluation with a reward-model-based autorater. For each experiment, we justify the existence of historical tasks that can be used to calibrate the errors of the synthetic data generator. We use standard confidence intervals based on the central limit theorem to satisfy the requirement of having access to classical inference methods, as stated in Assumption \ref{ass:classical_methods}.

We visualize the computed confidence intervals and report their mean width and estimated coverage. To estimate coverage on real datasets, we merely average over the tasks and treat the finite-sample target computed on the held-out task, i.e. $\theta^*(S)$, as $\theta^*(\cP^*)$. Note that this leads to a somewhat conservative coverage estimate, seeing that Section \ref{sec:finite_sample_target} gives a more powerful procedure for estimating $\theta^*(S)$. We complement our real-data findings with simulated-data experiments in order to be able to know $\theta^*(\cP^*)$ and evaluate coverage exactly.

All code and data needed to reproduce the results in this paper are
available at
\url{https://github.com/CarrieTan13/synthetic-data-inference}.

\subsection{Silicon sampling for American National Election Studies (ANES)}

We consider the use of synthetically generated respondents in public opinion surveys. Surveys are the standard instrument for measuring social and political attitudes, but fielding nationally representative surveys is slow and expensive. A growing line of work therefore asks whether large language models (LLMs) can stand in for human respondents, simulating the answers a person with a given demographic and political profile would give~\cite{argyle2023out}. This practice, known as ``silicon sampling,'' promises faster and cheaper measurement, but the simulated responses can, of course, be biased and need not track real public opinion.

Our target estimand is the population mean of an American National Election Studies (ANES) feeling
thermometer, in which respondents rate a social or political group on a
$0$--$100$ scale ($0$ maximally unfavorable, $100$ maximally favorable). For a
target group and a partisan subgroup of the electorate, the estimand is the
mean rating that the target group receives within that subgroup.

We treat each target group--respondent partisanship pair as one task. Historical tasks correspond to feeling thermometer scores measured in an earlier survey wave, for which both real survey responses and LLM-simulated responses are available; the
target task asks for inference on a thermometer score in a later wave, for which we only have simulated responses. For each historical task $j$, the
real-data estimand $\theta_j$ is the ANES mean and the synthetic-data estimand
$\tilde\theta_j$ is the corresponding LLM-simulated mean, so the gap
$\Delta_j = \theta_j - \tilde\theta_j$ measures the simulation bias of the LLM
for task $j$.

We use the silicon samples collected by \citet{bisbee2024synthetic}, generated by GPT-3.5 based on the $2016$ and $2020$ waves of the American National Election Studies (ANES). Synthetic responses are generated by prompting the model to adopt each respondent's demographic and political profile and return a $0$--$100$ rating. Combining the eleven thermometer targets with respondent partisanship (Democrat, Republican, Independent) yields $T = 33$ tasks per wave; we calibrate on the $2016$ tasks and predict the estimands for the corresponding $2020$ tasks, applying Algorithm~\ref{alg:main_algo} to each target and comparing the resulting intervals to those obtained by treating the simulated ratings as real survey responses. Because each respondent contributes a single LLM rating, the synthetic sample size matches the real sample size in every task, $N_j = n_j$; the $33$ historical tasks from 2016 contain between $435$ and $780$ respondents (median $762$), and the $33$ target tasks from 2020 between $657$ and $1,795$ respondents (median $1,596$). We set $\alpha=0.15$.

Figure~\ref{fig:anes_main} reports the resulting intervals across all
$33$ tasks from 2020 calibrated on $T = 33$ tasks from 2016. The task-exchangeability intervals from Algorithm~\ref{alg:main_algo} cover the
held-out $2020$ thermometer mean in $97\%$ of tasks, with a median interval width
of $29.8$ points on the $0$--$100$ scale. The naive synthetic-only intervals---which treat the
GPT-3.5--simulated ratings as if they were real survey responses---are far too
narrow and systematically biased, covering the truth in only $3\%$ of tasks. The
simulated ratings vary far less than real responses, so the naive intervals
concentrate tightly around a biased center and almost never contain the true means from 2020. This is consistent with findings by \citet{bisbee2024synthetic}.

\begin{figure}[t!]
    \centering
    \includegraphics[width=0.9\linewidth]{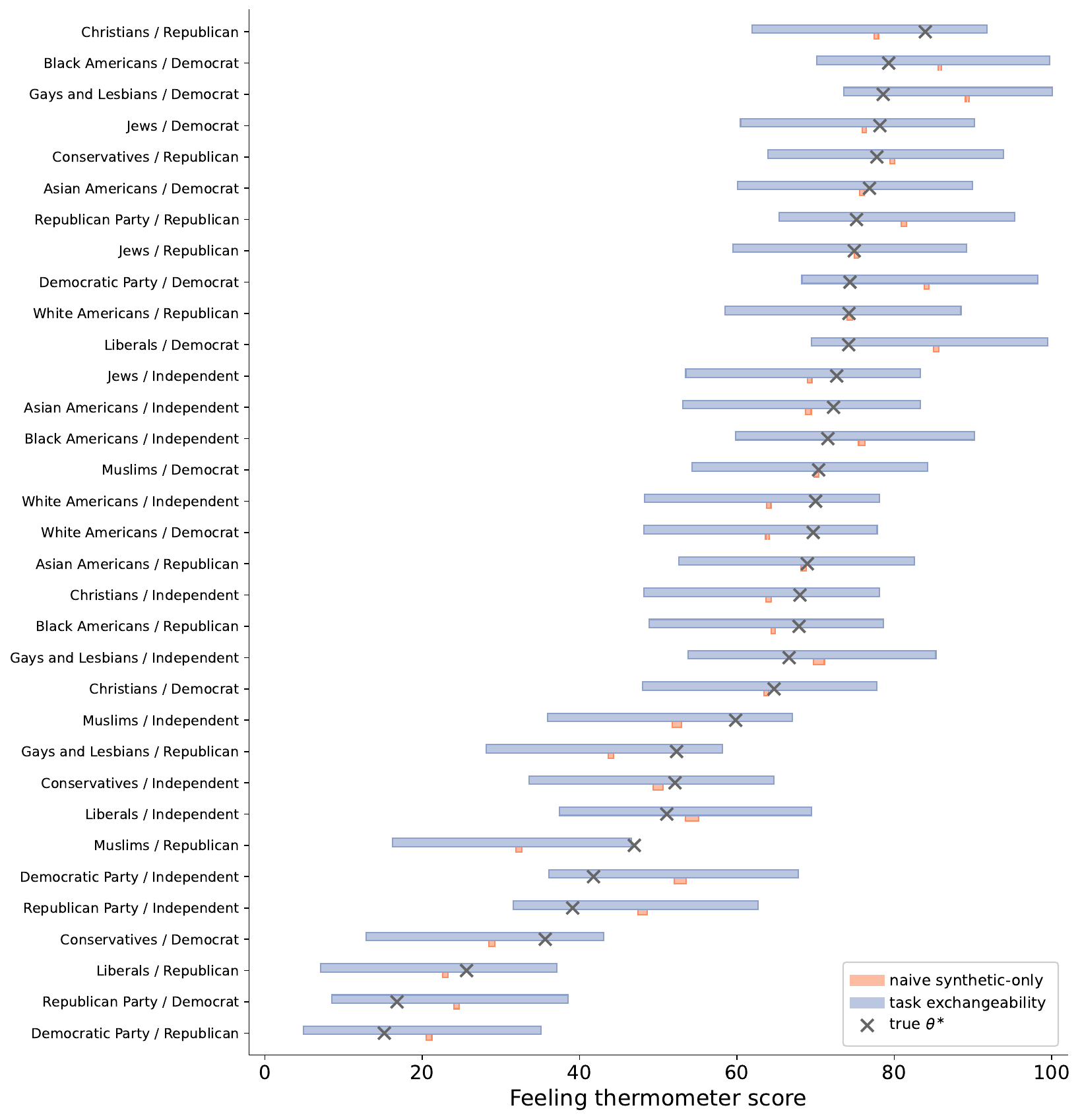}
    \caption{\textbf{Inference on ANES feeling-thermometer scores via task exchangeability.}
Each row corresponds to a task defined by a target group and respondent subgroup, with the estimand equal to the average ANES feeling-thermometer score on a 0--100 scale.}
    \label{fig:anes_main}
\end{figure}

\begin{figure}[htp!]
    \centering    \includegraphics[width=1\linewidth]{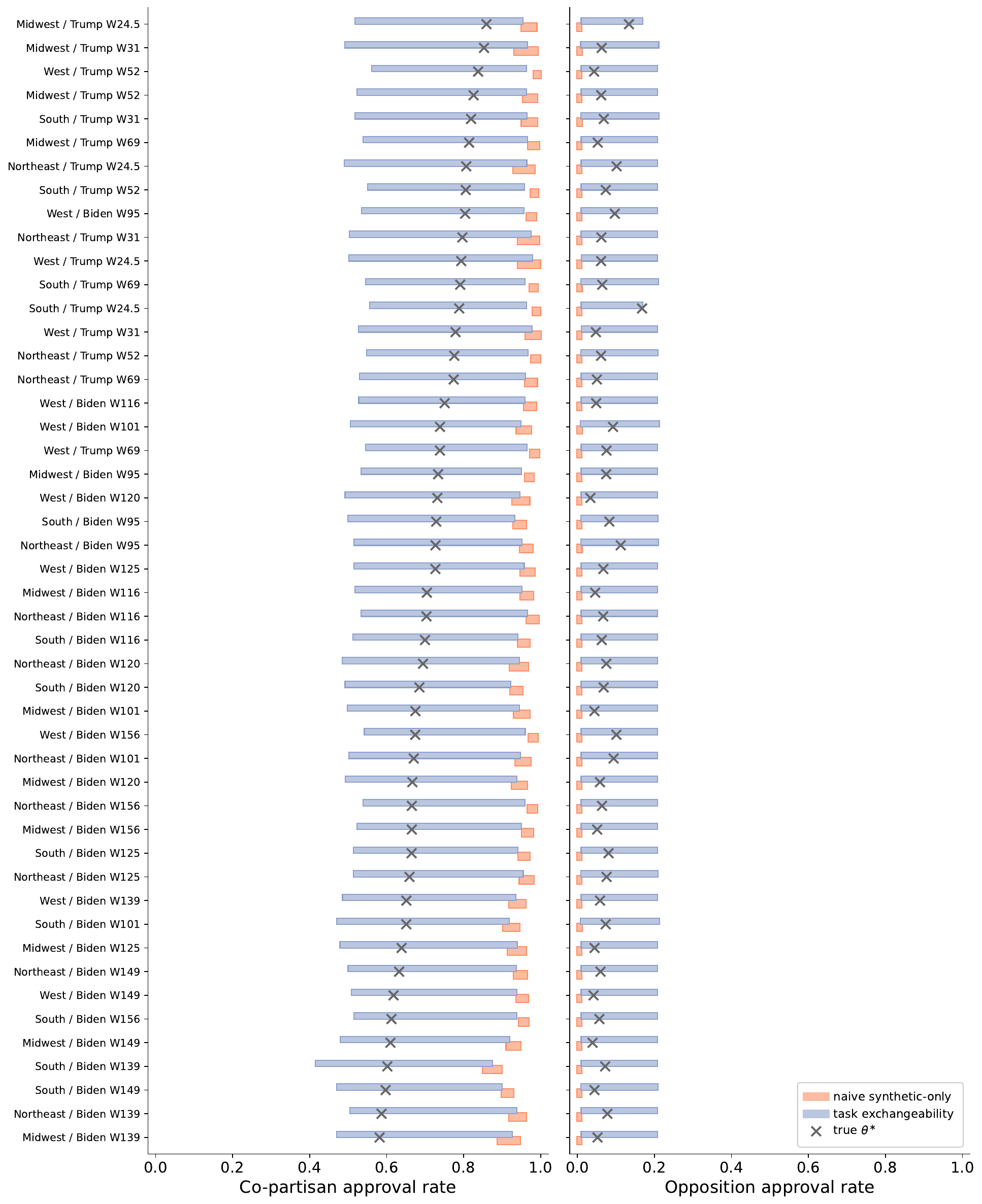}
    \caption{\textbf{Inference on presidential approval based on Pew ATP surveys via task exchangeability.}
Each row corresponds to a field date and census region, with the two-dimensional estimand equal to the average approval among respondents who share the president's party (left column) and the average approval among respondents from the opposing party (right column).}
    \label{fig:forest_pew_gpt_4o_multidim}
\end{figure}

\subsection{Silicon sampling for Pew American Trends Panel (ATP)}

We turn to a second public-opinion task: forecasting
presidential approval from the Pew Research Center's American Trends Panel
(ATP). Across many survey waves Pew asks the same question---\emph{``Do you approve
or disapprove of the way [the sitting president] is handling his job as
president?''}---coded as \texttt{POL1JB} for Joe Biden and
\texttt{POL1DT} for Donald Trump, with answers chosen from \emph{approve} and 
\emph{disapprove}. We use $12$ field dates spanning February~$2017$ to
October~$2024$ ($4$ Trump-era waves and $8$ Biden-era waves)\footnote{Datasets are publicly available from the Pew Research Center
American Trends Panel, \url{https://www.pewresearch.org/american-trends-panel-datasets/}. The
$12$ waves we use and their field dates are W24.5 (Feb 28--Mar 12, 2017),
W31 (Jan 29--Feb 13, 2018), W52 (Jul 22--Aug 4, 2019), W69 (Jun 16--22, 2020),
W95 (Sep 13--19, 2021), W101 (Jan 10--17, 2022), W116 (Oct 10--16, 2022),
W120 (Jan 18--24, 2023), W125 (Mar 27--Apr 2, 2023), W139 (Nov 27--Dec 3, 2023),
W149 (Jul 1--7, 2024), and W156 (Sep 30--Oct 6, 2024).} and take the
estimand to be the survey-weighted proportion of respondents who approve. Because
Pew releases survey weights depending on respondent demographics, each estimand is a weighted
approval proportion.
 
Presidential approval is sharply polarized: co-partisans of the sitting president
approve at roughly $70$--$85\%$, while the opposing party approves at only
$5$--$15\%$, with almost no mass in between. A single approval rate per region
therefore blends two nearly disjoint regimes. We instead make the target
two-dimensional: for each (field date, census region) cell we infer the pair
\[
  \theta_j=\bigl(\theta_j^{\mathrm{co}},\;\theta_j^{\mathrm{opp}}\bigr),
\]
where $\theta_j^{\mathrm{co}}$ is the approval rate among respondents who share
the president's party (Democrats under Biden, Republicans under Trump) and
$\theta_j^{\mathrm{opp}}$ is the approval rate among the opposing party. A task is
thus one (field date, census region) cell, and we apply the multidimensional
procedure of Algorithm~\ref{alg:multi_dimensional_targets}, forming a rectangular
joint confidence region whose two coordinates are calibrated separately and
combined through a Bonferroni correction.
 
Unlike the ANES experiment, where we use the silicon samples released by
\citet{bisbee2024synthetic}, we generate the synthetic Pew responses ourselves.
For every panelist we query GPT-4o through the API, supplying the
respondent's demographic and political profile (census region, age, gender,
education, race and ethnicity, religion, income, and party identification) and
recording the model's approve/disapprove answer. We take four independent
generations per respondent and average them to form a synthetic approval
indicator; the synthetic estimand $\tilde\theta_j$ is the weighted mean of these
indicators, and the gap $\Delta_j=\theta_j-\tilde\theta_j$ measures GPT-4o's
simulation bias for cell $j$.
 
We first evaluate by leaving out one task at a time: each of the $48$ tasks ($12$ field dates $\times\
4$ census regions) is held out in turn and its two-dimensional target is calibrated
via the remaining $T = 47$. This differs from the ANES experiment, where calibration
and target tasks come from two distinct survey years.
Per coordinate, the held-out cell contains $n_j$ real
respondents---co-partisan cells have a median of $568$ respondents (range
$255$--$1,898$) and opposition cells a median of $515$ (range $243$--$2,058$); the synthetic data contains
$N_j=n_j$ samples averaged on $4\,n_j$ GPT-4o generations. In total the
design uses roughly $66{,}000$ respondent-level real--synthetic pairs. Coverage of
a held-out cell is assessed against its real-correspondent weighted approval rate, as
described above.
 
Figure~\ref{fig:forest_pew_gpt_4o_multidim} reports the result at level
$\alpha=0.2$. The two panels show the intervals of the rectangular
region for the co-partisan and opposition coordinates, each sorted by its true
rate; the polarization is visible as the separation between the high
($\theta^{\mathrm{co}}\!\approx\!0.6$--$0.85$) and low
($\theta^{\mathrm{opp}}\!\approx\!0.03$--$0.17$) clusters. The
task-exchangeability region (Algorithm~\ref{alg:multi_dimensional_targets})
attains joint coverage for all $48$ tasks, with mean marginal
widths of about $0.44$ (co-partisan) and $0.20$ (opposition). In contrast, the
naive synthetic-only rectangle---tight intervals centered on the GPT-4o means---covers zero tasks: the simulated approval is systematically biased, with GPT-4o
overstating co-partisan approval and understating opposition approval, so its
narrow intervals never contain the truth.

To mirror the potential application more closely---calibrating on past waves and
forecasting future ones---we also split the tasks chronologically: we calibrate
on the $T = 40$ tasks from the ten earliest field dates and predict the eight tasks
from the two most recent waves (W149 and W156), each crossed with the four regions. 
Figure~\ref{fig:pew_temporal} shows the result at $\alpha=0.2$: the rectangular
task-exchangeability intervals from Algorithm~\ref{alg:multi_dimensional_targets} contain the realized approval rate in every
one of the eight forecast cells on both coordinates, whereas the naive
synthetic-only intervals---centered near $0.95$ for co-partisan and near $0$ for
opposition approval---miss the truth in all of them.

\begin{figure}[t]
    \centering
    \includegraphics[width=1\linewidth]{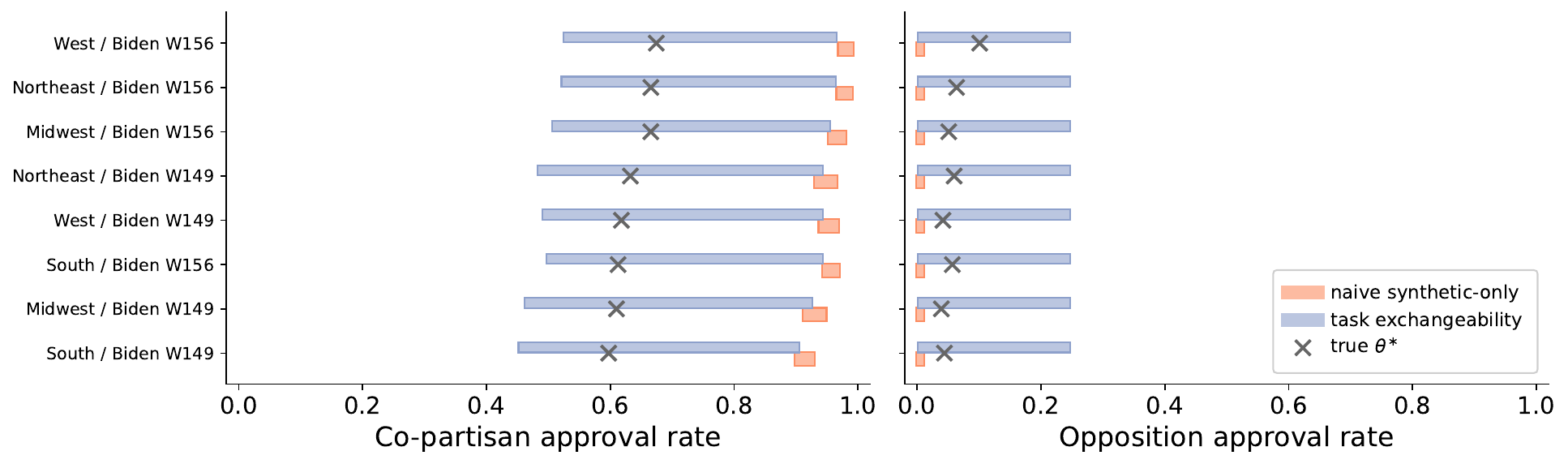}
    \caption{\textbf{Inference on presidential approval based on Pew ATP surveys via task exchangeability.}
Each row corresponds to a field date and census region, with the two-dimensional estimand equal to the average approval among respondents who share the president's party (left column) and the average approval among respondents from the opposing party (right column). Only tasks from earlier years are used for calibration.}
    \label{fig:pew_temporal}
\end{figure}

\begin{figure}[t!]
    \centering
    \includegraphics[width=0.9\linewidth]{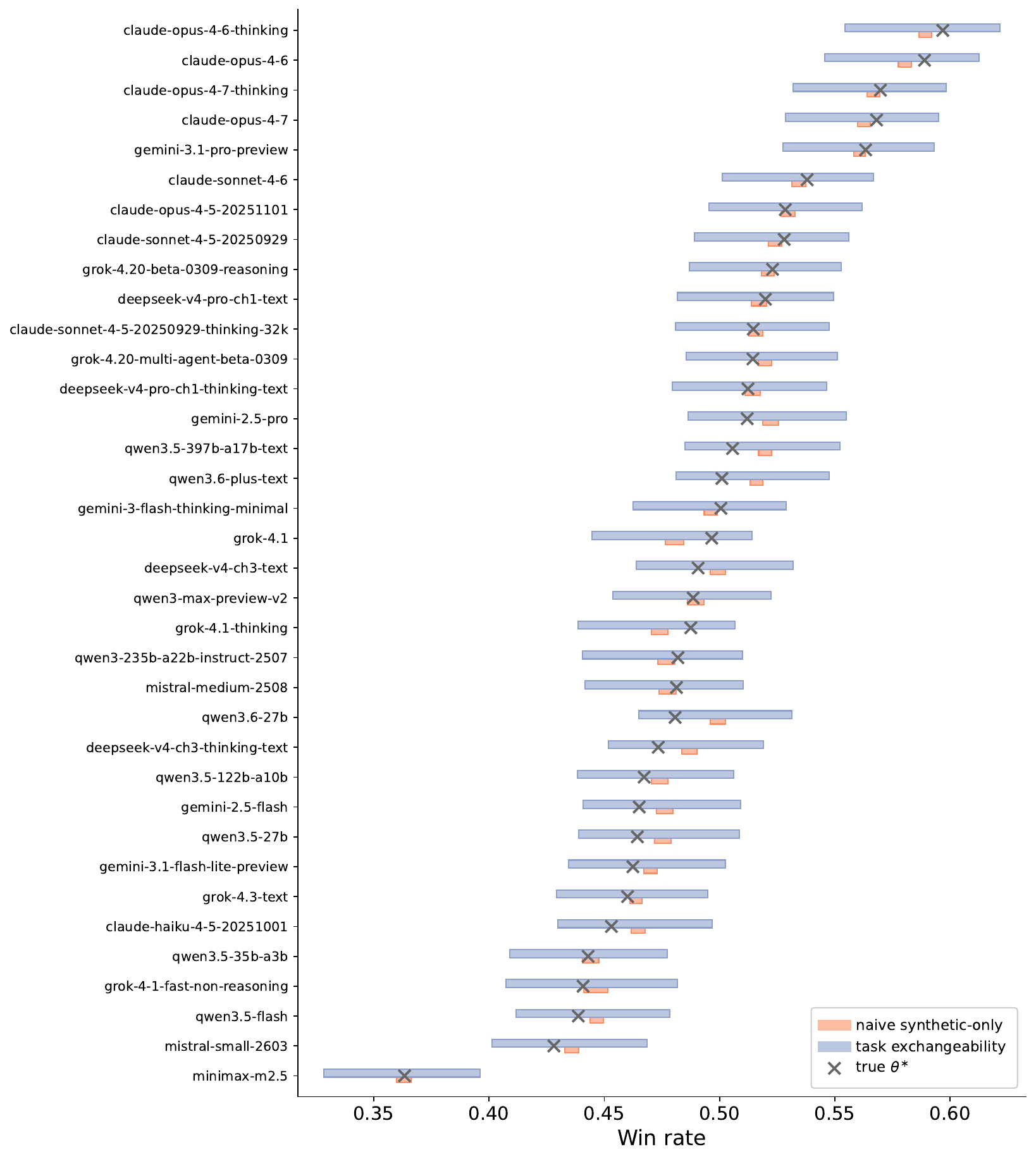}
    \caption{\textbf{Inference on Arena model win rates via task exchangeability.}
Each row corresponds to an AI model, with the estimand equal to the win rate of that model against the competing pool of models.}
\label{fig:forest_autorater_naive_vs_alg1}
\end{figure}

\subsection{AI model evaluation with autoraters}

We next consider AI model evaluation. Human preference evaluation is a central way to compare AI models, but collecting fresh pairwise preference data for every new model is costly and slow. This creates a practical tension in settings such as Arena \cite{chiang2024chatbot}: when a new model is released, one would like to obtain an initial evaluation quickly, while human preference data may only arrive gradually over time. As a result, ``autoevaluation'' has become increasingly common: rather than asking human annotators to compare model outputs, one uses an autorater, such as a reward model or LLM judge \cite{zheng2023judging}, to predict which response a human would prefer. Autoevaluation can therefore provide a quick preliminary evaluation of a new model, which can later be refined as human comparisons are collected. This makes evaluation substantially faster and cheaper, but introduces potential biases and misalignment with human judgment~\cite{ye2025justice}.


We consider two target estimands. The first is the win rate: for a model $m$,
$\theta_m$ is the probability that a human evaluator prefers the response from $m$ to
that of the opponent, when the prompt and opponent are drawn a particular reference distribution. Ties are counted as $\tfrac12$. The second estimand is the model's Bradley-Terry (BT)
score~\cite{bradley1952rank}, in which case $\theta_m$ is the ``strength'' of model
$m$ in a BT regression fit to the same comparisons. The BT solution is made identifiable by anchoring the score of a reference model to a fixed value; we anchor the score of one model to zero.

We treat each model evaluation as one task. In a real deployment, historical tasks correspond to previously evaluated models for which both human preference data and autorater-based evaluations are available. The target task corresponds to a newly released model for which we wish to infer the human-preference win rate or BT coefficient before enough human comparisons have been collected. For each historical model $j$, the real-data estimand $\theta_j$ is its human-preference win rate or BT coefficient, while the synthetic-data estimand $\tilde \theta_j$ is the corresponding estimand induced by the autorater. The gap $\Delta_j = \theta_j - \tilde \theta_j$ measures the evaluation bias of the autorater for model $j$. Thus, historical human evaluations are used to calibrate the autorater's bias, enabling quick inference for a new model while human data is accumulated more slowly over time. We evaluate the method by leave-one-out: each model is treated as the target task in turn, while all remaining models serve as the historical exchangeable tasks. 

For the experiment, we use prompts, human votes, and autorater votes from Arena \cite{chiang2024chatbot}. The autorater votes were generated using a reward model trained by Arena. The autorater takes a prompt and two model responses, and returns a predicted probability that a human would prefer the first response. The design is paired: every prompt yields both a human vote and an autorater vote
on the same pair of responses, so
$N_j=n_j$. The data comprises $280{,}737$ human--autorater comparisons among $74$ models, of which $73$ are evaluated as targets; the per-model sample size $n_j$ has a median of $7{,}918$ (range $2{,}865$--$20{,}324$). We apply Algorithm \ref{alg:main_algo} and Algorithm \ref{alg:finite_sample_target} across models, comparing the resulting intervals to those computed by treating autorater votes as real human-preference votes. 

Figure~\ref{fig:forest_autorater_naive_vs_alg1} reports inference on the win rate across all
$73$ evaluated models at level $\alpha = 0.1$. The task-exchangeability intervals from Algorithm~\ref{alg:main_algo}
cover the held-out human win rate for all models, with a median interval
width of $0.067$ on the win-rate scale. By contrast, the naive intervals---which treat the autorater votes as if they were real human preferences---are sharply
overconfident, covering the true win rate for only $19\%$ of models.

Figure~\ref{fig:forest_autorater_alg1_vs_onepiece} compares inference on $\theta^*(\cP^*)$ and inference on $\theta^*(S)$ in this problem. We apply Algorithm~\ref{alg:main_algo} and
Algorithm~\ref{alg:finite_sample_target} for the two targets, respectively. Both achieve at least the nominal
$90\%$ coverage: Algorithm~\ref{alg:main_algo} covers all of the held-out
targets, while Algorithm~\ref{alg:finite_sample_target} covers $91.8\%$. The latter value is
notable because it lands inside the narrow band
$\bigl[\,1-\alpha,\ 1-\alpha+\tfrac{2}{T+1}\,\bigr)=[0.9,\,0.927)$ guaranteed
by Theorem~\ref{thm:finite_sample_target}. This confirms the two-sided nature of
the guarantee: with $T=73$ tasks the finite-sample procedure is not only valid
but tightly calibrated, neither undercovering below $1-\alpha$ nor overcovering
by more than $\tfrac{2}{T+1}\approx 2.7$ percentage points. As expected, the intervals for the finite-sample target are also less than half as wide, with a median width of
$0.032$ against $0.067$ on the win-rate scale. The gain comes from the structure
of the finite-sample target: because $\theta^*(\tilde S)$ is observed exactly and
the historical gaps $\hat\Delta_j$ require no per-task confidence interval, the
entire error budget is spent on a single exchangeability step rather than split
across the three budgets $\alpha_1,\alpha_2,\alpha_3$ of
Algorithm~\ref{alg:main_algo}.

\begin{figure}[htp!]
\centering
\includegraphics[width=1\linewidth]{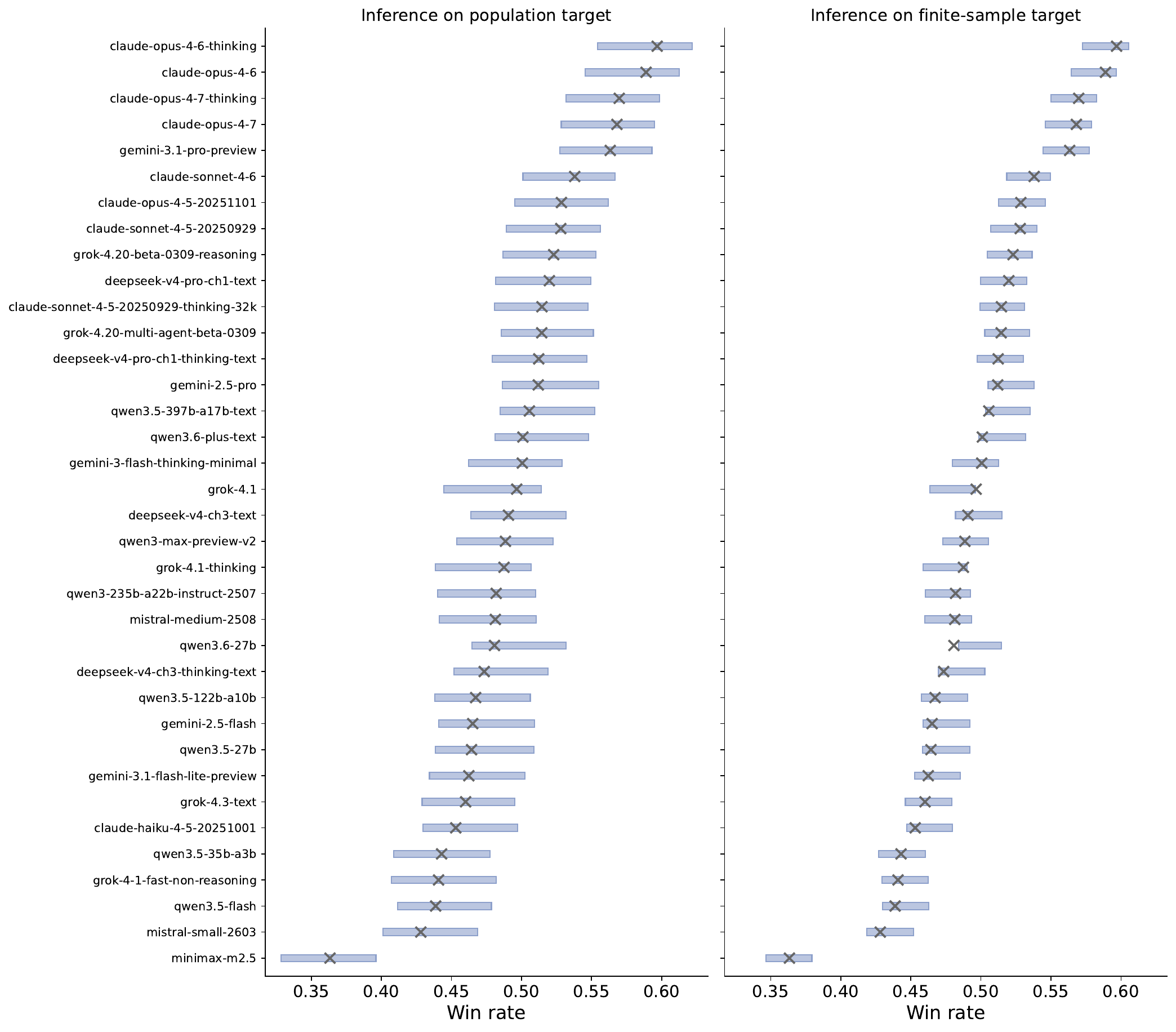}
    \caption{\textbf{Inference on population win rate (left) vs finite-sample win rate (right) via task exchangeability.}
Each row corresponds to an AI model, with the estimand equal to the win rate of that model against the competing pool of models. Note that the cross ($\times$) marks the finite-sample target in both cases, since we cannot compute the population target from real data.}
\label{fig:forest_autorater_alg1_vs_onepiece}
\end{figure}

Finally, Figure~\ref{fig:forest_autorater_bt} reports inference on the BT score.
Here the estimand for each model is its Bradley-Terry strength, obtained by inserting
the model into a regression fit on all existing comparisons of remaining models, with a
fixed reference model anchored at zero. The $44$ earliest-released models remain in the
regression at all times but are never treated as tasks (in practice these correspond
to existing public models); we run leave-one-out over the $30$ newly published
models. We target the finite-sample BT score via
Algorithm~\ref{alg:finite_sample_target}, comparing against the naive intervals
that treat the autorater votes as if they were real human preferences. For
readability the figure displays the $17$ held-out models belonging to frontier
model families, but the reported coverage and interval widths are computed over all
$30$ held-out models. At level $\alpha = 0.1$, Algorithm~\ref{alg:finite_sample_target}
covers the held-out finite-sample target for $93.3\%$ of models, with a median
interval width of $0.15$ on the log-strength scale. As in the win-rate
experiment, this coverage lands inside the band
$[1-\alpha, 1-\alpha+\frac{2}{T+1})=[0.9,0.967)$ guaranteed by
Theorem~\ref{thm:finite_sample_target}, here with $T=29$ historical tasks: the
procedure is not only valid but tightly calibrated, neither undercovering below
$1-\alpha$ nor overcovering by more than $\tfrac{2}{T+1}\approx 6.7$ percentage
points. By contrast, the naive intervals undercover, capturing the true BT score for
only about $73\%$ of models, well short of the nominal $90\%$.

\begin{figure}[t!]
\centering
\includegraphics[width=0.6\linewidth]{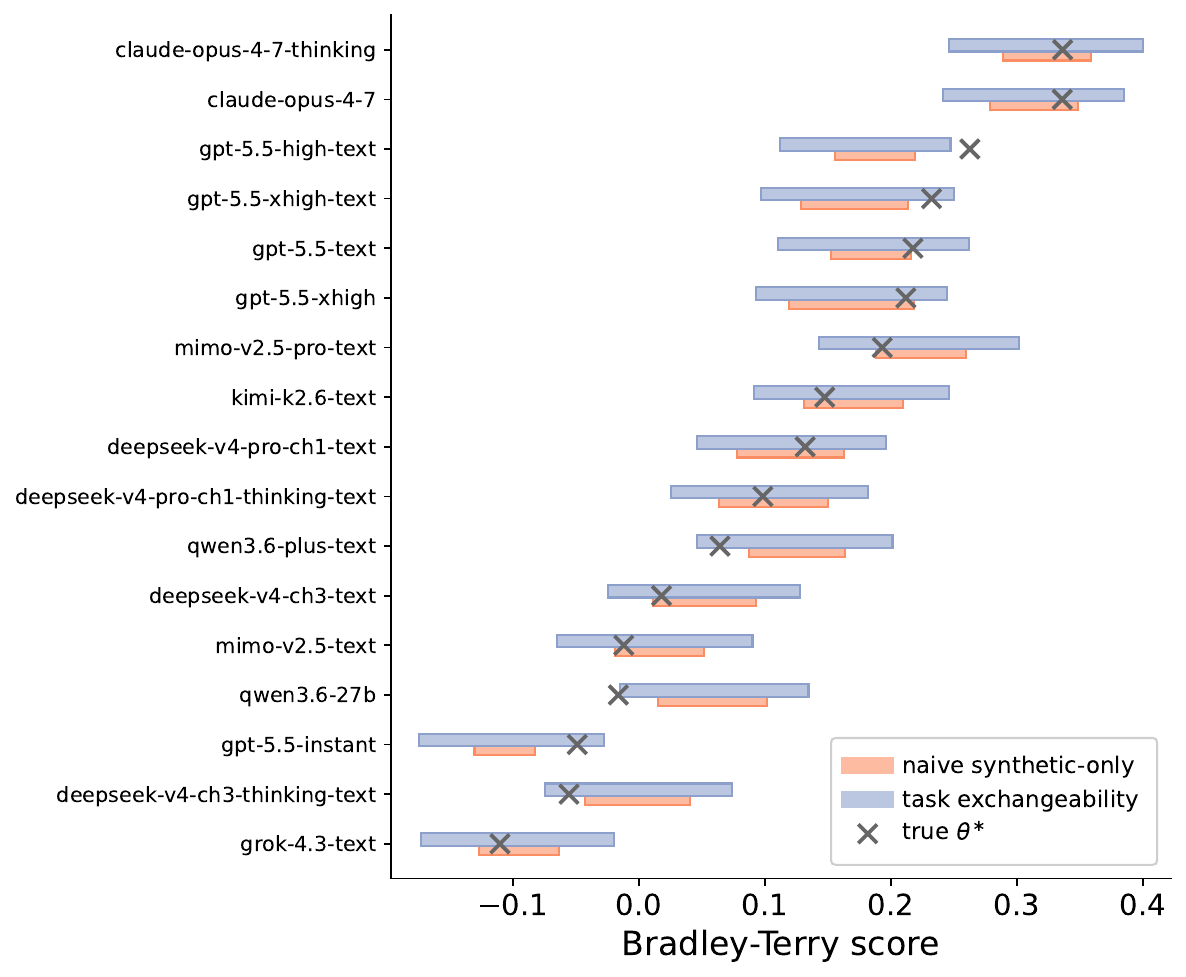}
\caption{\textbf{Inference on finite-sample Bradley-Terry score via task exchangeability.}
Each row corresponds to an AI model, with the estimand equal to its Bradley-Terry
score.}
\label{fig:forest_autorater_bt}
\end{figure}

\subsection{Simulated data}
\label{sec:synthetic_data}

In the applications above we do not have access to the true target
$\theta^*(\cP^*)$; we instead use the finite-sample quantity $\theta^*(S)$
computed on the held-out task as a proxy, which yields a somewhat conservative
coverage estimate. To evaluate coverage exactly, we complement those
experiments with simulated data, where the data-generating process, and hence
$\theta^*(\cP^*)$, is known. 

We mimic the Pew public opinion setting in which each task is a binary survey item and
$\theta_j$ is a population proportion of positive outcomes. For a pool of $T+1$ tasks, we draw, independently across $j \in [T+1]$,
\[
  p_j \sim \mathrm{Beta}(2,2), \qquad
  \varepsilon_j \sim \cN(\mathrm{bias},\,\tau^2), \qquad
  \tilde p_j = \mathrm{clip}(p_j+\varepsilon_j,\,0,\,1),
\]
and then sample a real dataset and a synthetic dataset
\[
  S_j \sim \mathrm{Bernoulli}(p_j)^{\otimes n_j}, \qquad
  \tilde S_j \sim \mathrm{Bernoulli}(\tilde p_j)^{\otimes N_j}.
\]
The target is the mean, $\theta_j=p_j$, the synthetic estimand is
$\tilde\theta_j=\tilde p_j$, and the simulation bias is
$\Delta_j=\theta_j-\tilde\theta_j$ with variance $\tau^2$. Because the tuples
$(p_j,\varepsilon_j,S_j,\tilde S_j)$ are i.i.d.\ across $j$, the sequence is
exchangeable, satisfying Assumption~\ref{ass:task_exchangeability}.
We use $\mathrm{bias}=0.05$, $\tau=0.10$, and dataset sizes $n_j=1000$ and $N_j=2000$. We report results for $T\in\{40,100\}$ historical tasks. 
We evaluate the coverage and width of our method by resampling all data and tasks $1000$ times from scratch.

We use
$(\alpha_1,\alpha_2,\alpha_3)=(0.1,\,0.2,\,0.7)\,\alpha$, placing most of the
budget on the exchangeability step because the cross-task variance $\tau^2$ typically dominates
the width. Since this step needs $\alpha_3\ge 2/(T+1)$ to return finite
bounds ($0.049$ for $T=40$ and $0.020$ for $T=100$), for small $\alpha$ we clip $\alpha_3$ to equal $2/(T+1)$ and split the remaining budget $1:2$ between $\alpha_1$
and $\alpha_2$.
Since $\theta^*\in[0,1]$ is a proportion, we clip the confidence intervals to $[0,1]$.

We report the width and coverage for varying $\alpha$ in Table~\ref{tab:synthetic_alg1}. The method attains coverage above the nominal $1-\alpha$
in every cell, confirming validity; we observe that the coverage is conservative due to the error budget splitting. The
synthetic-only baseline, by contrast, covers the truth only $9\%$--$15\%$ of the
time.

Table~\ref{tab:synthetic_tau} varies $\tau$ at a
fixed level $\alpha=0.10$. Coverage stays at or above the nominal $0.90$
throughout (conservative, tightening as $\tau$ grows). The
width grows steadily with $\tau$---from $\approx0.26$ at $\tau=0.025$ to $\approx0.81$ at $\tau=0.30$. The synthetic-only
baseline does not benefit: its interval stays narrow ($\approx0.03$) and its
coverage, already poor, degrades further as the bias becomes more variable.

\begin{table}[t]
\centering
\begin{tabular}{c cc cc cc}
\toprule
 & \multicolumn{4}{c}{Inference via task exchangeability} & \multicolumn{2}{c}{Synthetic-only} \\
\cmidrule(lr){2-5}\cmidrule(lr){6-7}
 & \multicolumn{2}{c}{$T=40$} & \multicolumn{2}{c}{$T=100$} & & \\
\cmidrule(lr){2-3}\cmidrule(lr){4-5}
$\alpha$ & coverage & width & coverage & width & coverage & width \\
\midrule
0.05 & 0.997 & 0.550 & 0.996 & 0.575 & 0.146 & 0.037 \\
0.10 & 0.993 & 0.508 & 0.986 & 0.479 & 0.127 & 0.031 \\
0.15 & 0.977 & 0.437 & 0.977 & 0.429 & 0.105 & 0.027 \\
0.20 & 0.975 & 0.432 & 0.961 & 0.393 & 0.096 & 0.024 \\
\bottomrule
\end{tabular}
\caption{\textbf{Inference on mean with simulated data for varying $\alpha$ and $T$.} We run simulated-data experiments in order to get a reliable estimate of coverage. In all cases, we see that coverage exceeds the nominal value.}
\label{tab:synthetic_alg1}
\end{table}

\begin{table}[t]
\centering
\begin{tabular}{c cc cc cc}
\toprule
 & \multicolumn{4}{c}{Inference via task exchangeability} & \multicolumn{2}{c}{Synthetic-only} \\
\cmidrule(lr){2-5}\cmidrule(lr){6-7}
 & \multicolumn{2}{c}{$T=40$} & \multicolumn{2}{c}{$T=100$} & & \\
\cmidrule(lr){2-3}\cmidrule(lr){4-5}
$\tau$ & coverage & width & coverage & width & coverage & width \\
\midrule
0.025 & 1.000 & 0.261 & 1.000 & 0.249 & 0.086 & 0.032 \\
0.050 & 1.000 & 0.346 & 0.999 & 0.328 & 0.165 & 0.032 \\
0.100 & 0.993 & 0.508 & 0.986 & 0.479 & 0.127 & 0.031 \\
0.150 & 0.985 & 0.630 & 0.977 & 0.598 & 0.091 & 0.030 \\
0.200 & 0.978 & 0.714 & 0.969 & 0.681 & 0.054 & 0.029 \\
0.300 & 0.975 & 0.807 & 0.958 & 0.774 & 0.038 & 0.025 \\
\bottomrule
\end{tabular}
\caption{\textbf{Inference on mean with simulated data for varying $\tau$ and $T$.} We run simulated-data experiments in order to get a reliable estimate of coverage. The target coverage level is $0.9$. In all cases, we see that coverage exceeds the nominal value.}
\label{tab:synthetic_tau}
\end{table}

\newpage 
\section{Discussion}

The guarantees developed in this paper rely on the ability to identify historical tasks that are exchangeable, or approximately exchangeable, with the current task of interest. This is a substantive requirement, and it should not be viewed as automatic. In some applications, the assumption may be natural: for example, repeated survey waves measuring related quantities, or evaluations of AI models from the same or a similar family. In general, however, identifying an appropriate population of exchangeable tasks may be difficult. The framework is therefore most useful when the researcher can articulate, based on domain knowledge and study design, why the historical tasks provide relevant information about the current gap between the real-data and synthetic-data estimands.

It is also important to acknowledge that the coverage guarantees are \emph{marginal} over the task-generating process: they hold for a randomly drawn target task from a common population of tasks. They are not conditional on an arbitrary fixed task and all of its idiosyncratic features. This is analogous to conformal prediction, where finite-sample validity is marginal over the random test point rather than conditional on its realized covariates. Thus, task exchangeability provides a principled approach to statistical inference for a new task drawn from a relevant task population, but it is important to justify that population carefully.

\section*{Acknowledgements}

This work was supported by a Dieter Schwarz Early Career Catalyst Award.

\bibliographystyle{plainnat}
\bibliography{refs}

\appendix

\newpage

\section{Deferred proofs}

\subsection{Proof of Theorem \ref{thm:beyond_exchangeability}}

We first show that the calibrated upper endpoint satisfies
\[
P\left(\hat\Delta_{T+1}^U\leq \hat\Delta^U\right)
\geq
1-\frac{\alpha_3}{2}-\varepsilon_U.
\]
For any vector $v=(v_1,\dots,v_{T+1})\in\R^{T+1}$, define
\[
S_U(v)
=
\left\{
i\in[T+1]:
v_i>
Q_{1-\frac{\alpha_3}{2}}\left(\sum_{j=1}^{T+1}\bar w_j\delta_{v_j}\right)
\right\}.
\]
By the definition of the quantile, the total weighted mass strictly above this quantile is at most $\frac{\alpha_3}{2}$. Hence, for every $v\in\R^{T+1}$,
\[
\sum_{i\in S_U(v)}\bar w_i\leq \frac{\alpha_3}{2}.
\]

Let $K$ be a random index, independent of all data, with $P(K=i)=\bar w_i$ for $i\in[T+1]$. For a vector $v$, let $v^i$ denote the vector obtained by swapping the $i$-th and $(T+1)$-st entries of $v$. Since $w_i\leq 1$, we have $\bar w_i\leq \bar w_{T+1}$ for every $i\in[T]$. Therefore, replacing the unobserved $(T+1)$-st entry by $+\infty$ can only increase the corresponding upper weighted quantile. Thus, for every $v\in\R^{T+1}$,
\[
v_{T+1}
>
Q_{1-\frac{\alpha_3}{2}}
\left(
\sum_{j=1}^T \bar w_j\delta_{v_j}
+
\bar w_{T+1}\delta_{+\infty}
\right)
\quad\Longrightarrow\quad
K\in S_U(v^K).
\]
Applying this deterministic implication with $v=V^U = (\hat\Delta_1^U,\dots,\hat\Delta_T^U,\hat\Delta_{T+1}^U)$ gives
\[
P\left(\hat\Delta_{T+1}^U>\hat\Delta^U\right)
\leq
P\left(K\in S_U(V^{U,K})\right).
\]
Averaging over $K$, we have
\begin{align*}
P\left(K\in S_U(V^{U,K})\right)
&=
\sum_{i=1}^{T+1}
\bar w_i
P\left(i\in S_U(V^{U,i})\right) \leq
\sum_{i=1}^{T+1}
\bar w_i
P\left(i\in S_U(V^U)\right)
+
\sum_{i=1}^{T+1}
\bar w_i
d_{\mathrm{TV}}\left(V^U,V^{U,i}\right).
\end{align*}
Here, $V^{U,T+1}=V^U$, so the total-variation term for $i=T+1$ is zero. Therefore, by the definition of $\varepsilon_U$,
\[
\sum_{i=1}^{T+1}
\bar w_i
d_{\mathrm{TV}}\left(V^U,V^{U,i}\right)
=
\varepsilon_U.
\]
For the first term, the deterministic bound on the weighted mass of $S_U(V^U)$ gives
\[
\sum_{i=1}^{T+1}
\bar w_i
P\left(i\in S_U(V^U)\right)
=
\mathbb{E}\left[
\sum_{i\in S_U(V^U)}\bar w_i
\right]
\leq
\frac{\alpha_3}{2}.
\]
Combining the previous displays yields
\[
P\left(\hat\Delta_{T+1}^U\leq\hat\Delta^U\right)
\geq
1-\frac{\alpha_3}{2}-\varepsilon_U.
\]
Next, we prove the analogous lower-tail bound,
\[
P\left(\hat\Delta_{T+1}^L\geq \hat\Delta^L\right)
\geq
1-\frac{\alpha_3}{2}-\varepsilon_L.
\]
This follows by applying the upper-tail argument to the vector $-V^L$. Indeed, by the quantile convention for lower endpoints, $\hat\Delta_{T+1}^L<\hat\Delta^L$
is equivalent to
\[
-\hat\Delta_{T+1}^L
>
Q_{1-\frac{\alpha_3}{2}}
\left(
\sum_{j=1}^T \bar w_j\delta_{-\hat\Delta_j^L}
+
\bar w_{T+1}\delta_{+\infty}
\right).
\]
The same argument as above therefore gives
\[
P\left(\hat\Delta_{T+1}^L\geq \hat\Delta^L\right)
\geq
1-\frac{\alpha_3}{2}-\varepsilon_L,
\]
where we use the fact that total variation distance is unchanged by multiplying the endpoint vectors by $-1$.

Combining the lower- and upper-endpoint bounds by a union bound gives
\[
P\left(
[\hat\Delta_{T+1}^L,\hat\Delta_{T+1}^U]
\subseteq
[\hat\Delta^L,\hat\Delta^U]
\right)
\geq
1-\alpha_3-\varepsilon_L-\varepsilon_U.
\]
By Assumption~\ref{ass:classical_methods}, the counterfactual current-task gap interval satisfies $
P\left(
\Delta_{T+1}
\in
[\hat\Delta_{T+1}^L,\hat\Delta_{T+1}^U]
\right)
\geq
1-\alpha_2$.
Therefore, another union bound yields
\[
P\left(
\Delta_{T+1}
\in
[\hat\Delta^L,\hat\Delta^U]
\right)
\geq
1-\alpha_2-\alpha_3-\varepsilon_L-\varepsilon_U.
\]

Finally, by Assumption~\ref{ass:classical_methods}, the interval $[\tilde L,\tilde U]$ satisfies
$P(\tilde\theta\in[\tilde L,\tilde U])\geq 1-\alpha_1$.
On the event that $\tilde\theta\in[\tilde L,\tilde U]$ and $\Delta_{T+1}\in[\hat\Delta^L,\hat\Delta^U]$, we have
\[
\theta^*=\tilde\theta+\Delta_{T+1}
\in
[\tilde L+\hat\Delta^L,\tilde U+\hat\Delta^U]
=
\ci.
\]
Thus, by a final union bound,
\[
P(\theta^*\in\ci)
\geq
1-\alpha_1-\alpha_2-\alpha_3-\varepsilon_L-\varepsilon_U.
\]

\subsection{Proof of Theorem \ref{thm:multi_dimensional_targets}}

By the validity of the synthetic-data confidence region,
\[
P(\tilde\theta\in\widetilde{\mathrm{CI}})\geq 1-\alpha_1.
\]
Now introduce the counterfactual current-task gap region
\[
K_{T+1}=\Delta^{\theta^*,\alpha_2}(S,\tilde S),
\]
where $S=S_{T+1}$ is the unobserved real dataset from the current task. By Assumption~\ref{ass:classical_methods},
\[
P(\Delta_{T+1}\in K_{T+1})\geq 1-\alpha_2.
\]
As in the proof of Theorem~\ref{thm:main_thm}, task exchangeability and the conditional independence of the synthetic samples imply that
\[
(\cT_1,S_1,\tilde S_1),\dots,(\cT_{T+1},S_{T+1},\tilde S_{T+1})
\]
are exchangeable. Since each $K_j$ is a deterministic function of $(\cT_j,S_j,\tilde S_j)$, the regions $K_1,\dots,K_{T+1}$ are exchangeable. Because the nested family $\{\D_r\}$ is fixed, the scalar scores
\[
R_j=\inf\{r:K_j\subseteq\D_r\},
\qquad j=1,\dots,T+1,
\]
are also exchangeable.

Let $\hat R$ be the quantile in Algorithm~\ref{alg:multi_dimensional_targets}. The standard exchangeability argument gives
\[
P(R_{T+1}\leq \hat R)\geq 1-\alpha_3.
\]
On the event $R_{T+1}\leq \hat R$, nestedness implies
\[
K_{T+1}\subseteq\D_{\hat R}.
\]
Therefore, by a union bound,
\[
P(\Delta_{T+1}\in\D_{\hat R})
\geq
1-\alpha_2-\alpha_3.
\]
Finally, on the event that $\tilde\theta\in\widetilde{\mathrm{CI}}$ and $\Delta_{T+1}\in\D_{\hat R}$, we have
\[
\theta^*=\tilde\theta+\Delta_{T+1}
\in
\widetilde{\mathrm{CI}}\oplus\D_{\hat R}
=
\ci.
\]
A final union bound gives the claimed coverage guarantee.

\subsection{Proof of Theorem \ref{thm:finite_sample_target}}

Since $\theta^*(\tilde S)$ is observed, it suffices to study coverage of the current-task finite-sample gap
\[
\hat\Delta_{T+1}=\theta^*(S)-\theta^*(\tilde S)
\]
by the interval $[\hat\Delta^L,\hat\Delta^U]$.

Each gap $\hat \Delta_j=\theta_j(S_j)-\theta_j(\tilde S_j)$ is a deterministic function of $(\cT_j,S_j,\tilde S_j)$. By Assumption~\ref{ass:task_exchangeability}, the pairs $(\cT_1,S_1),\dots,(\cT_{T+1},S_{T+1})$ are exchangeable. Since each synthetic dataset $\tilde S_j$ is drawn from $\tilde\cP_j=\cG_{\cT_j}$ independently according to the same generator $\cG$ conditional on the task, the triples $(\cT_j,S_j,\tilde S_j)_{j=1}^{T+1}$ are exchangeable as well. Hence the scalars $\hat\Delta_1,\dots,\hat\Delta_{T+1}$ form an exchangeable sequence.

By the same uniform-rank property used in the proof of Theorem~\ref{thm:main_thm},
\[
P\left(\hat \Delta_{T+1}<\hat \Delta_{(k_L)}\right)\leq \frac{\alpha}{2}, \qquad
P\left(\hat \Delta_{T+1}>\hat \Delta_{(k_U)}\right)\leq \frac{\alpha}{2}.
\]
A union bound gives
\[
P\left(\hat \Delta_{T+1}\in[\hat\Delta^L,\hat\Delta^U]\right)\geq 1-\alpha.
\]
On this event, since $\theta^*(S)=\theta^*(\tilde S)+\hat\Delta_{T+1}$,
\[
\theta^*(S)\in
[\theta^*(\tilde S)+\hat\Delta^L,\ \theta^*(\tilde S)+\hat\Delta^U]
=\ci,
\]
which proves the coverage claim.

It remains to prove the sharper statement under the condition that there are no ties. If $\hat\Delta_1,\dots,\hat\Delta_{T+1}$ are almost surely distinct, then the rank of $\hat\Delta_{T+1}$ among the $T+1$ exchangeable gaps is exactly uniform. The event $\hat \Delta_{T+1}<\hat \Delta_{(k_L)}$
occurs exactly when this rank is at most $k_L$, and the event
$\hat \Delta_{T+1}>\hat \Delta_{(k_U)}$
occurs exactly when this rank is larger than $k_U$. 
Since the rank is uniform on $\{1,\dots,T+1\}$, we have
\[
P\left(\hat \Delta_{T+1}<\hat \Delta_{(k_L)}\right)
=
\frac{k_L}{T+1}, \qquad 
P\left(\hat \Delta_{T+1}>\hat \Delta_{(k_U)}\right)
=
\frac{T+1-k_U}{T+1}.
\]
By the definition $
k_L=\lfloor (T+1)\frac{\alpha}{2}\rfloor
$ and $
k_U=\lceil (T+1)\left(1-\frac{\alpha}{2}\right)\rceil$,
we have
\[
P\left(\hat \Delta_{T+1}\notin[\hat\Delta^L,\hat\Delta^U]\right)
=
\frac{k_L}{T+1}
+
\frac{T+1-k_U}{T+1}
=
\frac{2k_L}{T+1}.
\]
Equivalently,
\[
P\left(\theta^*(S)\in\ci\right)
=
P\left(\hat \Delta_{T+1}\in[\hat\Delta^L,\hat\Delta^U]\right)
=
1-\frac{2\left\lfloor (T+1)\frac \alpha 2\right\rfloor}{T+1}.
\]
Finally, since $
\lfloor (T+1)\frac{\alpha}{2}\rfloor
\leq
(T+1)\frac{\alpha}{2},$
we recover
\[
P\left(\theta^*(S)\in\ci\right)\geq 1-\alpha.
\]
Moreover, since
$\lfloor (T+1)\frac{\alpha}{2}\rfloor
>
(T+1)\frac{\alpha}{2}-1$,
we also have
\[
P\left(\theta^*(S)\in\ci\right)
<
1-\alpha+\frac{2}{T+1}.
\]

\section{A weaker task exchangeability condition}
\label{sec:weaker_condition}

Our main result assumes exchangeability of the task--dataset pairs $(\cT_1,S_1),\dots,(\cT_{T+1},S_{T+1})$,
where $\cT_{T+1}=\cT^*$ and $S_{T+1}=S$ denotes the counterfactual real dataset from the target task. This assumption is convenient because it implies exchangeability of the estimated gap intervals used by Algorithm~\ref{alg:main_algo}. A weaker condition would only assume exchangeability at the level of the underlying tasks, not datasets. For example, the historical sample sizes $n_j$ may be chosen systematically, or may vary for reasons unrelated to the task-generating process.

We therefore consider the following weaker condition.

\begin{assumption}[Weaker task exchangeability]
\label{ass:task_exchangeability_weaker}
The tasks $\cT_1,\dots,\cT_{T+1}$, where $\cT_{T+1}=\cT^*$, are exchangeable.
\end{assumption}

Under Assumption~\ref{ass:task_exchangeability_weaker}, the confidence intervals computed from the historical samples need not be exchangeable, because they depend on the datasets $S_j$ and their sample sizes. However, the latent real--synthetic gaps
\[
\Delta_j = \theta_j(\cP_j)-\theta_j(\tilde \cP_j),
\qquad j=1,\dots,T+1,
\]
\emph{are} exchangeable, since each $\Delta_j$ is a deterministic function of the task $\cT_j$ and the synthetic distribution $\tilde \cP_j=\cG_{\cT_j}$ induced by the generator. We can therefore calibrate inference on the target task using the order statistics of $\Delta_1,\dots,\Delta_T$. Since these gaps are not observed, we first construct simultaneous confidence intervals for the historical gaps $\Delta_j$ and then use them to conservatively bound the relevant order statistics.

This leads to Algorithm~\ref{alg:weaker_algo}. Compared with Algorithm~\ref{alg:main_algo}, the procedure is more conservative: it uses a Bonferroni correction across the $T$ historical gap intervals, rather than relying on exchangeability of the estimated intervals directly.

\begin{algorithm}[t]
\caption{Inference under weaker task exchangeability}
\label{alg:weaker_algo}
\begin{algorithmic}[1]
\Require current task $\cT^*$, historical tasks with real data $(\cT_1,S_1),\dots,(\cT_T,S_T)$, synthetic data generator $\cG$, synthetic sample size $N$, error levels $\alpha_1,\alpha_2,\alpha_3\in(0,1)$
\State Draw $N$ synthetic samples for the current task: $\tilde S\stackrel{\mathrm{i.i.d.}}{\sim}\tilde \cP$, where $\tilde \cP=\cG_{\cT^*}$
\State Draw $N$ synthetic samples for each historical task: $\tilde S_j\stackrel{\mathrm{i.i.d.}}{\sim}\tilde \cP_j$, where $\tilde \cP_j=\cG_{\cT_j}$, for all $j\in[T]$
\State Compute a confidence interval for the current synthetic-data target: $
[\tilde L,\tilde U]
=
\mathrm{CI}^{\theta^*,\alpha_1}(\tilde S)$
\State For each historical task $j\in[T]$, compute a confidence interval for the gap:
$[\hat\Delta_j^L,\hat\Delta_j^U]
=
\Delta^{\theta_j,\alpha_2/T}(S_j,\tilde S_j)$
\State Let $
\hat\Delta_L=\hat\Delta_{(k_L)}^L$ and $\hat\Delta_U=\hat\Delta_{(k_U)}^U$, where $k_L=\lfloor (T+1)\frac{\alpha_3}{2}\rfloor$ and $k_U=\lceil (T+1)\left(1-\frac{\alpha_3}{2}\right)\rceil$ 
\Ensure Confidence interval for $\theta^*$: $\mathrm{CI} = [\tilde L+\hat\Delta_L,
\tilde U+\hat\Delta_U]$
\end{algorithmic}
\end{algorithm}

\begin{theorem}
\label{thm:weaker}
Suppose Assumptions~\ref{ass:classical_methods} and~\ref{ass:task_exchangeability_weaker} hold. Then, the confidence interval output by Algorithm~\ref{alg:weaker_algo} satisfies
\[
P(\theta^*\in \mathrm{CI})
\ge
1-(\alpha_1+\alpha_2+\alpha_3).
\]
\end{theorem}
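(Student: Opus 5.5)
We prove Theorem~\ref{thm:weaker} with the same three-event decomposition used for Theorem~\ref{thm:main_thm}. The difference is that the exchangeability step is applied to the latent gaps $\Delta_1,\dots,\Delta_{T+1}$ rather than to the estimated gap intervals. Those intervals are no longer exchangeable under Assumption~\ref{ass:task_exchangeability_weaker}, since they depend on the datasets $S_j$.

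Define three events:
\begin{itemize}
\item $E_1=\{\tilde\theta\in[\tilde L,\tilde U]\}$;
\item $E_2=\{\Delta_j\in[\hat\Delta_j^L,\hat\Delta_j^U]\text{ for all }j\in[T]\}$;
\item $E_3=\{\Delta_{(k_L)}\leq \Delta_{T+1}\leq \Delta_{(k_U)}\}$.
\end{itemize}
Here $\Delta_{(k)}$ is the $k$-th order statistic of the latent historical gaps $\Delta_1,\dots,\Delta_T$, with the conventions $\Delta_{(0)}=-\infty$ and $\Delta_{(T+1)}=+\infty$.

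\textbf{Bounding $E_1$ and $E_2$.} Assumption~\ref{ass:classical_methods} gives $P(E_1)\geq 1-\alpha_1$ directly. Each historical interval is built at level $\alpha_2/T$, so Assumption~\ref{ass:classical_methods} applied conditionally on $\cT_j$ together with a union bound over $j\in[T]$ gives $P(E_2)\geq 1-\alpha_2$. This Bonferroni step avoids any need for independence across tasks. One subtlety: Assumption~\ref{ass:classical_methods} is stated for fixed distributions. We therefore apply it conditionally on the tasks, treating $n_j$ as fixed given $\cT_j$, and then integrate.

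\textbf{Bounding $E_3$.} Each $\Delta_j=\theta_j(\cP_j)-\theta_j(\cG_{\cT_j})$ is a deterministic function of $\cT_j$, because $\cG$ is a fixed generator. Assumption~\ref{ass:task_exchangeability_weaker} therefore makes $\Delta_1,\dots,\Delta_{T+1}$ exchangeable. The uniform-rank argument from the proof of Theorem~\ref{thm:main_thm} then gives
\[
P\bigl(\Delta_{T+1}<\Delta_{(k_L)}\bigr)\leq \tfrac{\alpha_3}{2}, \qquad P\bigl(\Delta_{T+1}>\Delta_{(k_U)}\bigr)\leq\tfrac{\alpha_3}{2}.
\]
Hence $P(E_3)\geq 1-\alpha_3$. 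With ties the rank is not exactly uniform, but the probabilities of strict inequalities are still bounded by the stated fractions. This is the same fact already used in Theorem~\ref{thm:main_thm}.

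\textbf{Deterministic order-statistic comparison.} This is the step I expect to be the main (though mild) obstacle. On $E_2$ we have $\hat\Delta_j^L\leq\Delta_j\leq\hat\Delta_j^U$ for every $j\in[T]$. The $k$-th order statistic is monotone coordinatewise, so $\hat\Delta^L_{(k)}\leq \Delta_{(k)}\leq \hat\Delta^U_{(k)}$ for every $k$. The boundary indices $k=0$ and $k=T+1$ agree by convention. Hence on $E_2\cap E_3$,
\[
\hat\Delta_L=\hat\Delta^L_{(k_L)}\leq \Delta_{(k_L)}\leq\Delta_{T+1}\leq\Delta_{(k_U)}\leq \hat\Delta^U_{(k_U)}=\hat\Delta_U.
\]
The monotonicity fact can be verified by noting that $v_{(k)}$ is the smallest value $t$ such that $\#\{j: v_j\leq t\}\geq k$, and this count can only grow when entries decrease.

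\textbf{Conclusion.} On $E_1\cap E_2\cap E_3$ we have $\theta^*=\tilde\theta+\Delta_{T+1}\in[\tilde L+\hat\Delta_L,\tilde U+\hat\Delta_U]=\mathrm{CI}$. A final union bound gives $P(\theta^*\in\mathrm{CI})\geq 1-(\alpha_1+\alpha_2+\alpha_3)$.
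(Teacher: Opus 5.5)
Your proposal is correct and follows essentially the same argument as the paper. Both proofs combine three events: a union bound over the $T$ historical gap intervals built at level $\alpha_2/T$ for simultaneous coverage, the uniform-rank bound applied to the exchangeable latent gaps $\Delta_1,\dots,\Delta_{T+1}$, and coordinatewise monotonicity of order statistics on the simultaneous-coverage event (giving $[\Delta_{(k_L)},\Delta_{(k_U)}]\subseteq[\hat\Delta_L,\hat\Delta_U]$), closed by a final union bound with the $\alpha_1$ event.
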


\begin{proof}
By Assumption~\ref{ass:classical_methods}, the synthetic-data interval satisfies
\[
P(\tilde\theta\in[\tilde L,\tilde U])
\ge
1-\alpha_1.
\]
It remains to show that $[\hat\Delta_L,\hat\Delta_U]$ covers the current gap $
\Delta_{T+1}
=
\theta^*-\tilde\theta
=
\theta_{T+1}(\cP_{T+1})
-
\theta_{T+1}(\tilde\cP_{T+1})$
with probability at least $1-\alpha_2-\alpha_3$.

Since each $\Delta_j$ is a deterministic function of the task $\cT_j$, Assumption~\ref{ass:task_exchangeability_weaker} implies that
\[
\Delta_1,\dots,\Delta_{T+1}
\]
are exchangeable. Let $\Delta_{(1)}\le \cdots \le \Delta_{(T)}$
denote the order statistics of the historical gaps $\Delta_1,\dots,\Delta_T$, with the conventions $\Delta_{(0)}=-\infty$ and $\Delta_{(T+1)}=\infty$. Exchangeability implies that $\Delta_{T+1}$ has a uniform rank among all the gaps, and thus
\[
P\left(
\Delta_{T+1}\in
[\Delta_{(k_L)},\Delta_{(k_U)}]
\right)
\ge
1-\alpha_3,
\]
where $k_L=\lfloor (T+1)\frac{\alpha_3}{2}\rfloor$ and $k_U=\lceil (T+1)\left(1-\frac{\alpha_3}{2}\right)\rceil$.

Now define the simultaneous historical coverage event
\[
E
=
\bigcap_{j=1}^T
\left\{
\Delta_j\in
[\hat\Delta_j^L,\hat\Delta_j^U]
\right\}.
\]
By Assumption~\ref{ass:classical_methods}, for each $j\in[T]$, $
P\left(
\Delta_j\in
[\hat\Delta_j^L,\hat\Delta_j^U]
\right)
\ge
1-\alpha_2/T$.
Therefore, by a union bound,
\[
P(E)\ge 1-\alpha_2.
\]

On the event $E$, we have $
\hat\Delta_j^L\le \Delta_j\le \hat\Delta_j^U$ for all $j\in[T]$.
Hence the corresponding order statistics satisfy
\[
\hat\Delta_{(k_L)}^L
\le
\Delta_{(k_L)}
\text{ and }
\hat\Delta_{(k_U)}^U
\ge
\Delta_{(k_U)}.
\]
Equivalently, on $E$, $
[\Delta_{(k_L)},\Delta_{(k_U)}]
\subseteq
[\hat\Delta_L,\hat\Delta_U].
$
Combining this with the fact that $[\Delta_{(k_L)},\Delta_{(k_U)}]$ covers $\Delta_{T+1}$ with probability $1-\alpha_3$ gives
\[
P\left(
\Delta_{T+1}\in
[\hat\Delta_L,\hat\Delta_U]
\right)
\geq
P\left(
\Delta_{T+1}\in
[\Delta_{(k_L)},\Delta_{(k_U)}],
E
\right)  \
\geq
1-\alpha_2-\alpha_3.
\]

Finally, on the event that $
\tilde\theta\in[\tilde L,\tilde U]$ and $\Delta_{T+1}\in[\hat\Delta_L,\hat\Delta_U]$,
we have
\[
\theta^*
=
\tilde\theta+\Delta_{T+1}
\in
[\tilde L+\hat\Delta_L,
\tilde U+\hat\Delta_U]
=
\mathrm{CI}.
\]
A final union bound yields
\[
P(\theta^*\in\mathrm{CI})
\ge
1-(\alpha_1+\alpha_2+\alpha_3).
\]
\end{proof}

\end{document}